\newcolumntype{C}[1]{>{\centering\arraybackslash}p{#1}}
\newtheorem{lemma}{Lemma}
\theoremstyle{definition}
\newtheorem{definition}{Definition}
\newtheorem{theorem}{Theorem}
\DeclareRobustCommand*\circled[1]{%
  \tikz[baseline=(char.base)]{%
    \node[shape=circle,draw,inner sep=1pt] (char) {#1};}%
}
\newmdenv[
  topline=true,
  bottomline=true,
  leftline=false,
  rightline=false,
  linewidth=0.6pt,
  innerleftmargin=0pt,
  innerrightmargin=0pt,
  innertopmargin=4pt,
  innerbottommargin=4pt,
]{modelrule}
\begin{document}


\title{Structure-preserving Optimal Kron-based Reduction of Radial Distribution Networks}


\author{Omid Mokhtari,~\IEEEmembership{Member,~IEEE}, 
        Samuel Chevalier,~\IEEEmembership{Member,~IEEE}, 
        and Mads Almassalkhi,~\IEEEmembership{Senior Member,~IEEE}
        
\thanks{This material is based upon work supported by the U.S. Department of Energy’s Office of Energy Efficiency and Renewable Energy (EERE) under the Enabling Place-Based Renewable Power Generation using Community Energyshed Design initiative, award number DE-EE0010407. The views expressed herein do not necessarily represent the views of the U.S. Department of Energy or the United States Government.\textbf{}

O.~Mokhtari, S.~Chevalier, and M. Almassalkhi are with the Department of Electrical and Biomedical Engineering, University of Vermont, Burlington, VT 05405, USA
{\{omid.mokhtari, schevali, malmassa\}@uvm.edu}.}
}

\maketitle

\begin{abstract}
 Network reduction simplifies complex electrical networks to address computational challenges of large-scale transmission and distribution grids. Traditional network reduction methods are often based on a predefined set of nodes or lines to remain in the reduced network.
 This paper builds upon previous work on optimal Kron-based reduction of networks, which was formulated as a mixed-integer linear program, to enhance the framework in three aspects. First, the scalability is improved via a cutting plane restriction, tightened Big~M bounds, and a zero-injection node reduction stage.
 Next, we introduce a radiality-preservation step to identify and recover nodes whose restoration ensures radiality.
 A linearized voltage magnitude error constraint is incorporated to explicitly bound the difference between full and reduced networks.
 The model is validated through its application to the 533-bus distribution test system and a 3499-bus utility feeder for a set of representative loading scenarios. In the 533-bus system, an 85\% reduction was achieved with a maximum voltage error below 0.0025 per unit, while in the 3499-bus feeder, over 94\% reduction was obtained with maximum voltage errors below 0.002 per unit. Additionally, we show that the radialization step accelerates the runtime of optimal voltage control problems when applied to Kron-reduced networks.
\end{abstract}

\begin{IEEEkeywords}
 Optimal network reduction, radialization, Kron reduction, mixed integer linear programming, radial networks.
\end{IEEEkeywords}

\section{Introduction}
\IEEEPARstart{M}{odern} power grids span large geographic areas and are becoming more complex due to integration of renewable energy resources and control devices, e.g., inverters~\cite{DOE_report}. 
Solving power flow and optimization problems over many operating scenarios with full network models can become computationally challenging.
 Network reduction techniques are designed to mitigate this complexity by simplifying large networks while preserving essential system characteristics. Unfortunately, key operational characteristics, such as line flows and nodal voltages, are prone to inaccuracies in reduced networks relative to the original networks.
Consequently, network reduction involves a trade-off between accuracy and the level of reduction.
In this paper, we focus on network reduction for steady-state applications, particularly power flow and optimal power flow (OPF) studies.

 Several transmission network reduction methods preserve selected branch flows. For instance, bus aggregation methods cluster nodes with similar power transfer distribution factors (PTDFs) so that injections at each group have comparable effects on flows~\cite{PTDF_2012, PTDF_2015}.
 Similarly,~\cite{benjamin} group neighboring nodes by their contributions to congested lines in a DC OPF problem~\cite{benjamin}.
Optimization-based and data-driven methods tune susceptances or nodal injections so that the reduced model reproduces the DC flow response of the original system~\cite{Taheri_parameter_learning_2024, Zhu_Optimization_DC_NR_2018, Ribeiro_DD_TNR_2024}.
These methods are practical when preserving flows on a set of branches is the primary objective. However, methods based on PTDFs or DC power flow generally neglect voltage deviations and have limited applicability in distribution feeders~\cite{Bolognani_2016}.

Another common reduction approach keeps a subset of nodes and replaces the remainder of the network with an equivalent representation. Classical examples include Ward~\cite{ward_1949} and Kron reduction~\cite{kron_2013}. For transmission networks, this idea is used to keep boundary buses and reduce internal buses within zones to accelerate planning studies~\cite{TNR_multicut_kron_2018,jiang_zone_2023}.
For distribution feeders, inversion reduction preserves voltages at predefined kept nodes by computing equivalent impedances and distributed injections~\cite{Zachary_inversion_reduction_2018,Zachary_inversion_reduction_2019,Zachary_inversion_reduction_2021}. However, these distributed injections obscure the mapping between reduced nodes and physical loads or controllable devices.
Another method reduces distribution feeders into two-node equivalents to preserve voltage drop and total losses~\cite{todorovski_2024_dnr}, but does not preserve detailed nodal voltage profiles.
While these approaches are effective when representative nodes are specified \emph{a priori}, relying on a predetermined selection inherently restricts reduction quality. An optimally chosen set of kept nodes could achieve a lower approximation error for a given network size, or maintain the same error using a smaller subset of nodes.

To address the sub-optimality of fixed node selection, several recent works employ heuristic rules to group or select nodes based on network structure, electrical distance, or operational characteristics.
For transmission network reduction,~\cite{meyn_2025_ccsc} clusters nodes so that coherent generators remain in the same groups.
The amount of power passing through each node has also been used to prioritize representative nodes~\cite{huang_2024_pesgm}.
For distribution networks,~\cite{Mads_energies} clusters nodes using electrical distance, selects representative nodes, and then applies Kron reduction to form the reduced network.~\cite{huang_2024_drl} iteratively removes leaf nodes and updates reduced network parameters using reinforcement learning. Similarly,~\cite{graine_2023_dnr} reduces leaf nodes to simplify the distribution network reconfiguration problem. In these methods, node or cluster selection is not tuned against an explicit error metric, and the resulting reduction error is typically assessed a posteriori. Thus, they may fail to meet the accuracy requirements of the target application.

Unlike previous heuristic approaches, \cite{optiKRON2022} introduced Opti-KRON, a mixed-integer linear program (MILP) that selects kept nodes to optimally balance reduction levels with voltage accuracy.
However, the original framework in~\cite{optiKRON2022} lacked the scalability required for practical distribution grids and failed to preserve radial topologies.

We extend~\cite{optiKRON2022} to establish a scalable, optimization-based framework for balanced distribution feeders represented by single-phase equivalents. Our framework accommodates multiple voltage levels (e.g., medium and low voltage) and applies to both passive and active radial distribution networks. The main contributions of this work are:

\begin{itemize}
\item \textbf{Scalability:} We overcome the previous limitations to handle realistic feeders spanning thousands of nodes. This is achieved via a two-stage strategy: first, eligible zero-injection nodes are reduced, followed by an MILP optimization utilizing a cutting-plane restriction and tightened Big~M constraints.
\item \textbf{Explicit Error Bound:} We develop linearized voltage magnitude error constraints to explicitly bound and enforce a user-defined error threshold during the optimization process.
\item \textbf{Radiality Preservation:} Because Kron reduction typically transforms radial networks into dense, meshed equivalents~\cite{kron_2013}, we introduce an exact radialization procedure. By identifying and reinserting a minimal set of previously reduced nodes, we reconstruct an equivalent radial feeder. Preserving this topology is critical, as it enables algorithms developed for radial systems~\cite{dist_flow,BFS}, supports convex conic power flow formulations~\cite{Jabr_radial}, and maintains computational sparsity.
\end{itemize}

\begin{table*}[t]
\caption{Comparison of representative network reduction methods and the proposed framework.}
\label{tab:lit_compare}
\centering
\setlength{\tabcolsep}{4.2pt}
\renewcommand{\arraystretch}{1.18}
\begin{tabular}{C{2.35cm} C{1.45cm} C{1.55cm} C{2.6cm} C{1.15cm} C{1.65cm} C{1.65cm}}
\hline
\textbf{Reference} &
\textbf{Network} &
\begin{tabular}[c]{@{}c@{}}\textbf{Node Selection}\\\textbf{approach}\end{tabular} &
\begin{tabular}[c]{@{}c@{}}\textbf{Node Selection}\\\textbf{basis}\end{tabular} &
\textbf{Scalability} &
\begin{tabular}[c]{@{}c@{}}\textbf{Radiality}\\\textbf{preserved}\end{tabular} &
\begin{tabular}[c]{@{}c@{}}\textbf{Explicit}\\\textbf{error bound}\end{tabular} \\
\hline

\cite{PTDF_2012,PTDF_2015}
& Transmission
& Heuristic
& PTDF similarity
& Yes
& N/A
& No \\

\cite{benjamin}
& Transmission
& Heuristic
& Congestion
& Yes
& N/A
& No \\

\cite{Taheri_parameter_learning_2024,Zhu_Optimization_DC_NR_2018,Ribeiro_DD_TNR_2024,TNR_multicut_kron_2018,jiang_zone_2023}
& Transmission
& Pre-set
& Zones
& Yes
& N/A
& No \\

\cite{meyn_2025_ccsc,huang_2024_pesgm}
& Transmission
& Heuristic
& Network structure
& Yes
& N/A
& No \\

\cite{Zachary_inversion_reduction_2018,Zachary_inversion_reduction_2019,Zachary_inversion_reduction_2021,todorovski_2024_dnr}
& Distribution
& Pre-set
& Nodes
& Yes
& Yes
& No \\

\cite{Mads_energies}
& Distribution
& Heuristic
& Electrical distance
& Yes
& No
& No \\

\cite{huang_2024_drl,graine_2023_dnr}
& Distribution
& Heuristic
& Network structure
& Yes
& Yes
& No \\

\cite{optiKRON2022}
& General
& Optimization
& Voltage 
& No
& No
& No \\

\cite{optikron3}
& Distribution
& Heuristic
& Voltage 
& Yes
& Yes
& Yes \\

\textbf{This paper}
& \textbf{Distribution}
& \textbf{Optimization}
& \textbf{Voltage}
& \textbf{Yes}
& \textbf{Yes}
& \textbf{Yes} \\
\hline
\end{tabular}
\end{table*}

Table~\ref{tab:lit_compare} compares the proposed framework with representative prior reduction methods. Finally, we explicitly delineate this work from our recent extension to unbalanced three-phase feeders in~\cite{optikron3}. While the present paper develops a rigorous, exact MILP-based optimization framework for single-phase equivalent models,~\cite{optikron3} replaces the exact MILP step with a heuristic exhaustive search procedure to manage the increased computational complexity inherent to multiphase systems.
An overview of this manuscript's optimal network reduction framework, including radialization, is illustrated in Fig.~\ref{fig: overview}.
\begin{figure*}[t]
    \centering
    \includegraphics[width=1\linewidth]{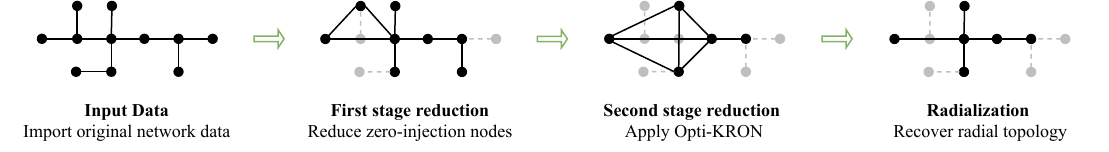}
    \caption{Flow of the proposed network reduction framework. The first stage reduces the network by removing nodes with no active elements (e.g., load or generation). The second stage iteratively applies the proposed optimization problem, i.e., Opti-KRON, to the initial reduced network until further reduction would violate voltage deviation limits. Finally, if the reduced network is meshed, radiality is restored by reintroducing a minimal set of previously reduced nodes. Radialization may also be applied after the first stage reduction.}
    \label{fig: overview}
\end{figure*}

The remainder of this paper is organized as follows. In Section~\ref{sec:model}, we summarize the network model and Kron reduction. Section~\ref{sec: formulation} formulates a mixed-integer nonlinear program (MINLP) for optimal Kron-based network reduction. Reformulation to MILP and the first stage reduction are presented in Section~\ref{sec: scalability}. We introduce the concept of radialization in Section~\ref{sec: radialization}. Experimental results are presented in Section~\ref{sec: results}. Finally, the paper concludes in Section~\ref{sec: end} with a summary and a brief discussion on future directions and applications.

\section{Network model and Kron Reduction} \label{sec:model}

\subsection{Network Representation}
Consider a power system network as a graph $\mathcal{G} = (\mathcal{V}, \mathcal{E})$, where $\mathcal{V}$ and $\mathcal{E}$ represent the set of $n$ electrical buses and $m$ branches, respectively.
The network's structure is defined by the adjacency matrix ${\Lambda} \in {\{0,1\}}^{n \times n}$, where $\Lambda_{ij}=1$ $(\Lambda_{ij}=0)$ indicates the presence (absence) of an edge between nodes~$i$ and~$j$. The nodal admittance matrix, $Y \in \mathbb{C}^{n \times n}$, captures the electrical characteristics of the network. Let $V, I \in \mathbb{C}^{n}$ denote complex nodal voltage and current injection vectors, respectively, which are related via Kirchhoff's Current Law as
\begin{align} \label{eq: KCL}
     I = Y V.
\end{align}

\subsection{Kron Reduction of Electrical Networks} \label{kron-reduction}
The Kron reduction of a graph results in a smaller graph by eliminating a subset of nodes with no power injection, while preserving the equivalent impedances among the remaining nodes. 
Consider a network with an admittance matrix $Y$, where we divide the nodes into a set of nodes to be kept, denoted as $k$, and a set of nodes to be removed, denoted as $r$. If the current injection at the nodes in set $r$ is zero (i.e., $I_r = 0$), we can partition \eqref{eq: KCL} as:
\begin{align} \label{eq: Y-partitioned}
            \left[\begin{array}{c}
            {I}_{k}\\
            \hline  0
            \end{array}\right] & =\left[\begin{array}{c|c}
            Y_{kk} & Y_{kr}\\
            \hline Y_{rk} & Y_{rr}
            \end{array}\right]\left[\begin{array}{c}
            {V}_{k}\\
            \hline {V}_{r}
            \end{array}\right].
\end{align}
The Kron reduction of $Y$ is the Schur complement of $Y_{rr}$~\cite{kron_2013}. Thus, the Kron-reduced admittance matrix, denoted as $Y_{\rm Kron}$, is given by
\begin{align} \label{eq: Y_kron}
     Y_{\rm Kron} &= Y_{kk} - Y_{kr} {Y_{rr}}^{-1} Y_{rk},
\end{align}
such that
\begin{align} \label{eq: Ik=Yk Vk}
     I_k = Y_{\rm Kron} V_k.
\end{align}

\subsection{Kron-based Network Reduction}

While traditional Kron reductions require reducing over a set of zero-injection nodes, the proposed Kron-based approximation allows any set of nodes to be Kron-reduced.
The framework selects an optimal subset of representative nodes, referred to as \textit{super-nodes}, and assigns each reduced node to one of them.
Nodes assigned to the same super-node form one cluster. Within each cluster, we aggregate the current injections of reduced nodes at the associated super-node. The voltage of each reduced node is then approximated by the voltage of its super-node in the reduced network.

To form clusters, we define a binary assignment matrix $A \in \{0,1\}^{n \times n}$, where $A_{i,j}=1$ if node $j$ is represented by super-node $i$, and $A_{i,j}=0$ otherwise. Specifically, node $i$ is a super-node if  $A_{i,i}=1$ and reduced otherwise. For any super-node $i$, the corresponding cluster is defined as $\mathbf{A}_i := \{j \in \mathcal{V} \mid A_{i,j}=1\}$. Each cluster is a connected subgraph: every reduced node in the cluster remains connected to its super-node through nodes belonging to that same cluster.
Fig.~\ref{fig: aggregation} provides an illustrative example of how super-nodes are formed during the network reduction process.
\begin{figure} 
    \centering
    \includegraphics[width=0.95\linewidth]{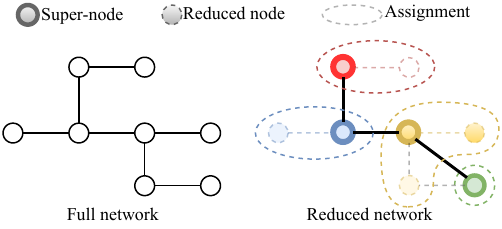}
    \caption{Opti-KRON procedure; an 8-bus radial network has been reduced to a 4-bus network. The injections of reduced nodes are assigned to their super-nodes in each cluster. Reduced network topology and the connections among super-nodes can be found through equation~\eqref{eq: Y_kron}.}
    \label{fig: aggregation}
\end{figure}

Since current can only be assigned to one super-node, $\sum_i A_{i,j}=1$ is always enforced. Furthermore, $A_{i,j} \leq A_{i,i} $ ensures that injections cannot be assigned to a reduced node. Now, we can calculate the aggregated current vector as
\begin{align} \label{eq: Kron_current}
 I_{\rm Kron} = A I,
\end{align}
where $I_{\rm Kron} \in \mathbb{C}^n$ maps the current injections corresponding to reduced nodes to zero. 
In this study, we assume loads and generations have the same current injections in reduced and full networks. Thus, the nodal voltages of the Kron-based reduced network can be obtained from
\begin{align} \label{eq: power-flow-basic}
 Y V_{\rm agg} = A I,
\end{align}
where $V_{\rm agg} \in \mathbb{C}^{n}$ is the voltage of the full network under the aggregated current injections. Since voltages of kept nodes in the full and reduced networks are equal,
without using~\eqref{eq: Y_kron} and~\eqref{eq: Ik=Yk Vk}, we can still compute the super-nodes' voltages resulting from a Kron-based reduction, i.e.,~\eqref{eq: power-flow-basic}.
Additionally, we represent the voltages at Kron-reduced nodes by those at their respective (kept) super-nodes. This assignment is straightforward using matrix $A$ as 
\begin{align} \label{eq: Kron_voltage}
V_{\rm Kron} = A^{\rm T} V_{\rm agg},
\end{align}
where $V_{\rm Kron} \in C^n$ defines nodal voltages of a Kron-based reduced network. The goal of this manuscript is, therefore, to select a nodal assignment matrix $A$ that maximizes the number of reduced \textit{children-nodes}, balanced against how accurately the selected super-nodes represent their respective children-nodes.

\section{Optimal Kron-based Reduction of Networks} \label{sec: formulation}
 
In this section, we formulate an optimization problem that determines the optimal assignment matrix $A$, which controls the trade-off between accuracy and reduction by specifying the cluster configurations and the number of super-nodes. Therefore, we need to define an objective function that balances reduction and voltage accuracy.

To quantify accuracy, we define an error metric that measures the deviation between a node’s voltage in the full network and the voltage of its super-node in the reduced network.
Additionally, to consider different operating conditions we calculate the error with respect to $n_L$ different loading scenarios $l \in \mathcal{L}$. Data for these scenarios comes from representative AC power flow solutions on the full network. We denote the error matrix associated with loading scenario $l$ as
\begin{align} \label{eq: error}
 {E_{l}} = \overbrace{A {\rm diag}\{\hat{V}_l\}}^{\text{Original voltages}} - \overbrace{{\rm diag}\{V_l \} A}^{\text{ Assigned voltages}} \in \mathbb{C}^{n\times n},  
\end{align}
where $\hat{V}_l$ denotes the vector of voltage data in the full network satisfying $Y \hat{V}_l = \hat{I}_l$. Voltage vector $V$ after nodal aggregation is defined by $Y V_l = A \hat{I}_l$ \footnote{For notational simplicity, we hereafter use $V$ in place of $V_{\rm agg}$.}.
Based on \eqref{eq: error}, if $A_{i,j} = 1$, then $E_{l,i,j} =\hat{V}_{l,j} - V_{l,i} $, while $A_{i,j} = 0$ returns $E_{l,i,j} = 0$. To evaluate the worst-case deviation within each cluster, we define the Maximum Intra-Cluster Error (MICE) as
\begin{align} \label{eq: MICE}
 {\rm MICE}_{l,i} = \left\Vert {\vec e_i^{\,T}} E_{l} \right\Vert_{\infty},
\end{align}
where $\vec{e}_i$ represents the $i^{\rm th}$ standard basis vector with all elements zero except for the $i^{th}$ element.
Accordingly, the objective function that optimally balances the trade-off between reduction levels and voltage deviation is then given by:
\begin{align} \label{eq: obj}
\mathbf{O} =\overbrace{ \sum_{l \in \mathcal{L}} {\sum_{i \in \mathcal{V}} { 
 \left\Vert {\vec e_i^{\,T}} E_{l} \right\Vert_{\infty}} } }^{\rm Accuracy} - \alpha  \overbrace{ \sum_{i}^n(1-A_{i,i})}^{\text{ Reduction lev.}},
\end{align}
with $\alpha$ denoting a weighting factor that determines the network reduction priority.  
The term $\sum_{i}^n(1-A_{i,i})$ counts the number of eliminated nodes, as $A_{i,i}=0$ indicates a reduced node. Given this objective function, we can express the optimization problem as follows:
	\begin{subequations} \label{raw model}
        \begin{align}
    \min_{A, V, E} \quad &{\sum_{l \in \mathcal{L}} {\sum_{i \in \mathcal{V}} { \left\Vert {\vec e_i^{\,T}} E_{l} \right\Vert_{\infty} }}  - \alpha \sum_{i \in \mathcal{V}} (1- A_{i,i}) }  \label{eq: obj2}\\
    {\rm s.t.}\;\; \quad & {E_{l}} = A {\rm diag}\{\hat{V}_l\} - {\rm diag}\{V_l \} A \quad \forall l \in \mathcal{L} \label{eq: error-opt}\\
    & \mathbf{Y} V_l = A \hat{I_l} \quad \forall l \in  \mathcal{L} \label{eq: powerflow}\\
    & - \bar{E} \leq |A^{\rm T}V_l|-|\hat{V}_l| \leq \bar{E} \quad \forall l \in \mathcal{L} \label{eq: v_abs}\\   
    & A^{\rm T} \mathbf{1} = \mathbf{1} \label{eq: assign_limit1}\\
    &A_{i,j} \leq A_{i,i} \quad \forall i,j \in \mathcal{V} \label{eq: assign_limit2}\\
    &A_{i,j} \leq \Lambda_{i,j} \quad \forall i,j \in \mathcal{V}, i\neq j\label{eq: assign_adj}\\
    &A_{i,j} \in \{0,1\} \quad \forall i,j \in \mathcal{V}. \label{eq :binary}
	\end{align}
 \end{subequations}
In the proposed model, we can calculate super-node voltages for each loading scenario in~\eqref{eq: powerflow}. Equation~\eqref{eq: v_abs} bounds the voltage magnitude deviation to lie within $\Bar{E}$.
Here, $\bar{E}$ is the user-defined error limit for voltage magnitude.
Each node can only be assigned to one cluster, forced by~\eqref{eq: assign_limit1}.
According to~\eqref{eq: assign_limit2}, reduced nodes can be assigned only to super-nodes.
To avoid cases where the optimizer aggregates nodes that are far from each other, we use the adjacency matrix to restrict nodal aggregations to neighboring nodes with~\eqref{eq: assign_adj}. The optimization converges when further reduction violates~\eqref{eq: v_abs}. Therefore, we should set $\alpha$ to a positive value large enough to incentivize further reduction, but not excessively large that the reduction term dominates the objective.
However,~\eqref{raw model} represents a non-convex MINLP, and solving it for realistic networks presents computational challenges. Thus, in the following section, we enhance the computational scalability of the network reduction problem.
 
\section{Improving Scalability} \label{sec: scalability}
To improve the scalability of~\eqref{raw model}, we introduce a Kron-based reduction algorithm to selectively reduce a subset of the zero-injection nodes before solving the optimal network reduction problem. In the next step, we reformulate~\eqref{raw model} as an MILP. Finally, we add a cutting plane to speed up the MILP.
\subsection{Reduction of Zero-Injection Nodes}
At this step, we reduce a subset of zero-injection nodes, defined as $\mathcal{Z} := \{i \in \mathcal{V} \mid \hat{I}_{l,i} =0, \ \forall l \in \mathcal{L}\}$. Reduction of these nodes, as shown in~\eqref{eq: Y-partitioned}, does not affect the voltages at the other nodes. 
For each zero-injection node satisfying the conditions described below, we aim to assign it to a super-node from the set of non-zero-injection nodes, denoted by $\mathcal{I} := \mathcal{V} \setminus \mathcal{Z}$, and update the assignment matrix $A$ accordingly. These assignments must satisfy the following conditions:
\begin{enumerate}
 \item  The voltage magnitude difference between a zero-injection node and its super-node must not exceed the threshold $\bar{E}$, to satisfy~\eqref{eq: v_abs}.
 \item The path between each reduced node and its super-node must contain only zero-injection nodes from the same cluster and no other super-nodes.
\end{enumerate}
However, not all the nodes in $\mathcal{Z}$ will necessarily satisfy these conditions.
 Algorithm~\ref{alg: zero_injection_reduction} summarizes the procedure to assign eligible zero-injection nodes to super-nodes, where each node $i \in \mathcal{Z}$ is associated with a candidate set $\mathcal{C}_i := \{ j \in \mathcal{I} | \pi_{i,j} \subseteq \mathcal{Z}\}$, with $\pi_{i,j}$ representing the set of nodes along the path from node $i$ to node $j$, excluding node $j$.
 
\begin{algorithm}[H]
    \caption{Assignment of Zero-Injection Nodes}
    \begin{algorithmic}[1]
    
    \STATE \textbf{Input:} $\Lambda$ (adjacency matrix), $\hat{V}$ (voltage data) , $\bar{E}$ (voltage error bound), $\mathcal{C}$ (candidate node set), $\mathcal{Z}$ (zero-injection node set)
    \STATE $A \gets I$ \hfill $\triangleright$ Assign the Identity
    
    \STATE $\mathcal{U} \gets \mathcal{Z}$ \hfill $\triangleright$ Unassigned nodes
    
    \STATE $\Delta \gets 1$ \hfill $\triangleright$ Termination indicator

    \STATE  $d_{i,j} \gets \max_{l \in \mathcal{L}} ||\hat{V}_{l,i}| - |\hat{V}_{l,j}||, \ \forall i \in \mathcal{Z}, \ j \in \mathcal{C}_i$

    \WHILE{ $\Delta = 1$ }
        \STATE $\Delta \gets 0$
        \FOR{each $i \in \mathcal{U}$}
            \FOR{each $j \in \mathcal{V} \setminus \mathcal{U}$ such that $\Lambda_{i,j} = 1$}
                \STATE $k \gets \{ m \in \mathcal{V} \mid A_{m,j} = 1 \}$
                \IF{$d_{i,k} \leq \bar{E}$ and $d_{i,k} \leq d_{i,h} \ \forall h \in \mathcal{C}_i$}
                    \STATE $A_{i,i} \gets 0$, \quad $A_{k,i} \gets 1$
                    \STATE $\mathcal{U} \gets \mathcal{U} \setminus \{ i \}$, \quad $\Delta = 1$
                \ENDIF
            \ENDFOR
        \ENDFOR
    \ENDWHILE
    \end{algorithmic}
    \label{alg: zero_injection_reduction}
\end{algorithm}
After this process, the proposed optimization problem, finalized later in this section as~\eqref{opt_problem_final}, can be applied to the resulting network to yield an optimal reduced network.
\subsection{Problem Reformulation}
 To reformulate the non-linear term ${\rm diag}\{V_l \}A$
 in~\eqref{eq: error-opt}, we employ the Big~M method by defining an auxiliary variable $W_l := {\rm diag}\{V_l \}A$. This allows us to replace~\eqref{eq: error-opt} with its equivalent set of inequality constraints:
\begin{subequations} \label{eq: Big-M}
    \begin{align}
    &W_{l,i,j} \leq (1-A_{i,j})M + V_{l,i} \quad \forall i,j \in \mathcal{V}, \ \forall l \in \mathcal{L}\\
    &W_{l,i,j}  \geq (A_{i,j}-1)M + V_{l,i} \quad \forall i,j \in \mathcal{V}, \ \forall l \in \mathcal{L} \\
    &W_{l,i,j}  \leq MA_{i,j} \quad \forall i,j \in \mathcal{V}, \ \forall l \in \mathcal{L}  \\
    &W_{l,i,j}  \geq-MA_{i,j} \quad \forall i,j \in \mathcal{V}, \ \forall l \in \mathcal{L}.
    \end{align}
\end{subequations}
The parameter $M$ in \eqref{eq: Big-M} represents a sufficiently large constant. However, selecting excessively large values can cause numerical challenges during the optimization process~\cite{bigM}. Hence, it is valuable to determine the smallest possible $M$ that guarantees the feasibility of the constraints without compromising numerical performance. Appendix \ref{appendix: big-m} discusses the selection of $M$ in more detail.
Now, we can update~\eqref{eq: error-opt} to
\begin{align}
    {E_{l}} = A {\rm diag}\{\hat{V}_l\} - W_l       \quad \forall l \in \mathcal{L}. \label{eq: error-opt_lin}
\end{align}

To derive a linear approximation of~\eqref{eq: v_abs}, we first calculate the nodal voltage error as
\begin{align} \label{eq: errordef}
    e_{l} = A^{\rm T} V_l -\hat{V}_l.
\end{align}
We can rewrite~\eqref{eq: errordef}  as
     $A^{\rm T} V_l = \hat{V}_l + e_{l}$, and update~\eqref{eq: v_abs} to
\begin{align}
    &|\hat{V}_l|-\Bar{E} \leq |\hat{V}_l + e_{l}|  \leq |\hat{V}_l| + \Bar{E} \quad \forall l \in \mathcal{L}. \label{eq: v_abs^1}
\end{align}
In practice, the voltage magnitudes $|\hat{V}_l|$ are typically close to 1~p.u. and $0 \leq \bar{E} \ll 1$, which implies that $|\hat{V}_l|-\bar{E} \ge 0$. Therefore, we can take the square of~\eqref{eq: v_abs^1} and simplify it to
{\small
 \begin{subequations} \label{eq: v_abs^2}
    \begin{align}
        {-2|\hat{V}_l|\bar{E} + \bar{E}^2} - |e_{l}|^2 &\le
        2({\Re\{\hat{V}_l\} \Re\{e_{l}\} +  \Im\{\hat{V}_l\} \Im\{e_{l}\}}) \label{eq: v_abs^2-ub}\\
        \ {2|\hat{V}_l|\bar{E} + \bar{E}^2} - |e_{l}|^2 &\ge
        2({\Re\{\hat{V}_l\} \Re\{e_{l}\} +  \Im\{\hat{V}_l\} \Im\{e_{l}\}}), \label{eq: v_abs^2-lb}
    \end{align}
\end{subequations}}
where $\Re\{ \cdot\}$ and $\Im \{\cdot \}$ denote real and imaginary parts of complex values.~\eqref{eq: v_abs^2-ub} represents a non-convex constraint.
Considering that we are solving the problem in a per-unit system, $\bar{E}$ and $e_{l}$ represent voltage deviations normalized to the system's base voltage. 
Accordingly, since $ \bar{E}^2 \ll 1 $ and $ |e_l|^2 \ll \mathbf{1} $, we can approximate~\eqref{eq: v_abs^2} with $ |e_l|^2 - \bar{E}^2\approx \mathbf{0}$: 
 \begin{subequations} \label{eq: v_abs_lin}
    \begin{align}
        -& |\hat{V}_l|\bar{E}  \leq
        {\Re\{\hat{V}_l\} \Re\{e_{l}\} +  \Im\{\hat{V}_l\} \Im\{e_{l}\}}\\
        & {|\hat{V}_l|\bar{E} } \geq
        {\Re\{\hat{V}_l\} \Re\{e_{l}\} +  \Im\{\hat{V}_l\} \Im\{e_{l}\} }.
    \end{align}
\end{subequations} 
We consider $\Re\{e_{l}\}$ and $\Im\{e_{l}\}$ as two distinct variables, as shown in Appendix~\ref{appendix: rec-decomp}. Thus,~\eqref{eq: v_abs_lin} represents a linear approximation of~\eqref{eq: v_abs}. In Section~\ref{sec: results}, we show that the approximation error is negligible for realistic networks.

\subsection{Cutting Plane Restriction}
The following cutting plane reduces the feasible set by limiting the number of simultaneous reductions in the network to no more than $q\ge 1$:
\begin{align}\label{eq: CP init}
\sum_{i \in \mathcal{V}}(1-A_{i,i}) \leq  q.
\end{align}
Adding~\eqref{eq: CP init} to the MILP problem results in a more tractable optimization problem. However, it limits the degree of reduction, which necessitates an iterative approach to reduce networks further.
In an iterative approach, the solution at each iteration builds on the optimal solution from the previous iteration. Thus, we need to integrate previous assignments, as determined from $A$ at the optimal point. First, we need to define a new variable $\Omega \in \{0,1\}^{n \times n}$ that captures only the current iteration’s assignments and adjust \eqref{eq: assign_adj} as
\begin{align}
    &\Omega_{i,j} \leq \Lambda_{i,j} \quad \forall i,j \in \mathcal{V}, i\neq j.\label{eq: assign_adj_new}   
\end{align}
Then, the full assignment matrix can be computed from 
 \begin{align}
    A = \Omega A^{\rm prev}, \label{eq: A_Omega}
\end{align}
where $A^{\rm prev}$ represents the optimal $A$ from the previous iteration. For the initial iteration, $A^{\rm prev}$ is derived from the first stage (Algorithm~\ref{alg: zero_injection_reduction}), or, if stage 1 is skipped, it is set to the identity matrix $I$.
To avoid repeatedly considering reduced nodes, we should also update the objective function to
\begin{align} \label{eq: obj-updated}
\mathbf{O} = \sum_{l \in \mathcal{L}} {\sum_{i \in \mathcal{V}} { 
 \left\Vert {\vec e_i^{\,T}} E_{l} \right\Vert_{\infty}}}  - \alpha  \sum_{i \in \mathcal{V}}(A_{i,i}^{\rm prev}-A_{i,i}),
\end{align}
and~\eqref{eq: CP init} as
\begin{align}\label{eq: CP}
\sum_{i \in \mathcal{V}}(A_{i,i}^{\rm prev}-A_{i,i}) \leq  q.
\end{align}
Note that in the three-phase extension~\cite{optikron3}, the exhaustive-search formulation is restricted to the single-reduction case, i.e., \(q=1\).
In contrast, the MILP framework developed here allows a user-selected reduction limit \(q\), and thus multiple simultaneous reductions per iteration.

Now, the proposed optimization problem, which we refer to as Opti-KRON, can be stated as
\[
\begin{tabular}{@{}p{\linewidth}@{}}
\toprule
\textbf{Opti-KRON}\\
\hline
\end{tabular}
\]
{\small
\begin{equation} \label{opt_problem_final}
\hspace*{-\leftmargini}%
\begin{alignedat}{2}
\min_{\substack{A,V,E \\ \Omega \in \{0,1\}^{n \times n}}}& \
 \sum_{l\in\mathcal L}\sum_{i\in\mathcal V}\!\left\Vert \vec e_i^{\,T} E_l \right\Vert_\infty
  - \alpha \sum_{i=1}^n\!\big(A^{\rm prev}_{i,i}-A_{i,i}\big) \ & & \\
\text{s.t.} \ 
& \eqref{eq: powerflow} && \mathllap{\text{\small (Kirchhoff's Current Law)}} \\
& \eqref{eq: assign_limit1}, \eqref{eq: assign_limit2}, \eqref{eq: assign_adj_new}, \eqref{eq: A_Omega}
&\quad & \mathllap{\text{ (Assignment constraints)}} \\
& \eqref{eq: Big-M}, \eqref{eq: error-opt_lin}
&& \mathllap{\text{(Error matrix ($E_l$) calculation)}} \\
& \eqref{eq: errordef}, \eqref{eq: v_abs_lin}
&& \mathllap{\text{(Error bounds)}} \\
& \eqref{eq: CP}
&& \mathllap{\text{(Cutting plane)}}
\end{alignedat}
\end{equation}
}
\noindent\rule{\dimexpr\linewidth-2.3pt}{0.3pt}
The formulation in~\eqref{opt_problem_final} is presented in complex variables. To reformulate using real values only, we apply the rectangular coordinate transformation. Details of the reformulation are presented in Appendix~\ref{appendix: rec-decomp}. Note that~\eqref{opt_problem_final} represents an MILP with $\mathrm{tr}(A^{\rm prev}) + \mathbf{1}^{\rm T} \Lambda \mathbf{1}$ integer decision variables, where at most $q$ node reductions are allowed by~\eqref{eq: CP}. Algorithm~\ref{alg: NR} describes an iterative approach to implement~\eqref{opt_problem_final}.
\begin{algorithm}[H]
	\caption{Network Reduction Algorithm}
    \begin{algorithmic}[1]
	\STATE \textbf{Input:} $Y$, $\hat{V}$, $\hat{I}$, $\Lambda$, $A^{\rm prev}$, $\alpha$, $q$, $\bar{E}$
			
            \STATE  $\Delta \gets \infty$  \hfill $\triangleright$ Initialize termination indicator
            
             \WHILE{$\Delta > 0$ }
                    \STATE $A^* \gets $ Solve optimization problem~\eqref{opt_problem_final} 
                    \STATE $\Delta \gets  \sum_i (A^{\text{prev}}_{i,i} - A_{i,i}^*)$ 
                        \STATE $\Lambda \gets (A^* \Lambda {A^*}^{\rm T}) \circ (\mathbf{1} \mathbf{1}^{\rm T} - I)$
                        \STATE  $A^{\text{prev}} \gets A^*$
                \ENDWHILE
    \end{algorithmic}
            \label{alg: NR}
\end{algorithm}

At each iteration, the adjacency matrix $\Lambda$ is updated to connect each super-node to all neighbors of itself and its assigned child-nodes through $ (A^* \Lambda {A^*}^{\rm T}) \circ (\mathbf{1} \mathbf{1}^{\rm T} - I)$, where $\circ$ denotes element-wise multiplication.  

Algorithm~\ref{alg: NR} terminates after finitely many iterations; at each non-terminating iteration at least one super-node is reduced ($\Delta>0$). Therefore, the number of super-nodes strictly decreases with each non-terminating iteration. Since the total node count is finite and strictly bounded below, the algorithm is guaranteed to terminate. Moreover, when $\alpha$ is chosen large enough to prioritize any feasible additional reduction, termination with $\Delta=0$ occurs only when no further node reduction is feasible under the constraints in~\eqref{opt_problem_final}. In practice, this stopping condition is mainly controlled by the error threshold~$\bar{E}$, since additional reductions become infeasible once they would violate the prescribed error bound.

The assignment matrix of the last iteration reveals the super-nodes, reduced nodes, and cluster configurations of the optimal Kron-based reduced network. Crucially, this explicit mapping ensures that physical devices, such as loads and distributed resources, remain strictly associated with known locations in the reduced model, preserving interpretability for downstream optimization tasks. However, the Kron-based formulation will produce a reduced, but meshed, representation of the original network. To recover a radial distribution network, we propose a novel \textit{radialization} procedure next.

\section{Radialization} \label{sec: radialization}
In this section, we develop a new \textit{radialization} methodology for recovering equivalent radial networks from Kron-reduced (meshed) networks. Radialization effectively excises all connected, meshed subgraphs of super-nodes of the reduced network and replaces them by electrically equivalent radial subgraphs. This process is defined next and depends on the following definition and lemma.
\begin{definition}[Clique]
A subset of nodes $C \subseteq \mathcal{V} $ in a graph $G = (\mathcal{V}, \mathcal{E}) $ is called a \emph{clique} if every pair of distinct nodes in $C $ is connected by an edge, i.e., $(i, j) \in \mathcal{E} $ for all $i, j \in C $, $i \neq j $.
\end{definition}

\begin{lemma} \label{lem: cliques}
Let $G = (\mathcal{V}, \mathcal{E}) $ be a graph representing an electrical network with complex admittance matrix $Y^{\mathrm{full}} \in \mathbb{C}^{n \times n} $. Suppose node $r \in \mathcal{V} $ is Kron-reduced. Then:
\begin{enumerate}
    \item All nodes that were adjacent to $r $ in the original network become directly connected to each other in the reduced network and form a clique.
    \item The connections among nodes that were not adjacent to $r $ remain unchanged.
\end{enumerate}
\end{lemma}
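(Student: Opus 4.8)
The plan is to work directly from the Kron reduction formula \eqref{eq: Y_kron} applied to the single reduced node $r$, treating $r$ as the ``removed'' set and $k = \mathcal{V} \setminus \{r\}$ as the ``kept'' set. When a single node is eliminated, $Y_{rr}$ is the scalar $Y^{\mathrm{full}}_{rr}$, $Y_{kr}$ is the column vector of admittances from $r$ to every other node, and $Y_{rk}$ its transpose (for a network without phase-shifting transformers, $Y^{\mathrm{full}}$ is symmetric, so $Y_{rk} = Y_{kr}^{\rm T}$). The reduced admittance matrix is then
\begin{align} \label{eq: rank-one-update}
Y_{\rm Kron} = Y_{kk} - \frac{1}{Y^{\mathrm{full}}_{rr}}\, Y_{kr} Y_{kr}^{\rm T},
\end{align}
i.e., $Y_{kk}$ perturbed by a rank-one matrix. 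The key observation is that the $(i,j)$ entry of this rank-one term is $-Y^{\mathrm{full}}_{ir} Y^{\mathrm{full}}_{jr}/Y^{\mathrm{full}}_{rr}$, which is nonzero precisely when both $Y^{\mathrm{full}}_{ir} \neq 0$ and $Y^{\mathrm{full}}_{jr} \neq 0$, that is, when both $i$ and $j$ were neighbors of $r$.

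From here both claims follow by a short case analysis on the pair $(i,j)$ with $i \neq j$. \textbf{Case 1: both $i$ and $j$ adjacent to $r$.} The rank-one term contributes a nonzero off-diagonal entry $-Y^{\mathrm{full}}_{ir}Y^{\mathrm{full}}_{jr}/Y^{\mathrm{full}}_{rr}$. One must note that generically this does not cancel against $(Y_{kk})_{ij}$; under the standard assumption that line admittances have consistent sign (e.g., $\Re\{Y^{\mathrm{full}}_{ir}\} < 0$ for an actual line, positive series conductance), the added term and any pre-existing $(Y_{kk})_{ij}$ do not cancel, so $(Y_{\rm Kron})_{ij} \neq 0$ and the edge $(i,j)$ is present. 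Hence every pair of former neighbors of $r$ is mutually connected — they form a clique — which is claim~1. \textbf{Case 2: at least one of $i,j$ not adjacent to $r$.} Then $Y^{\mathrm{full}}_{ir}Y^{\mathrm{full}}_{jr} = 0$, so the rank-one term vanishes at $(i,j)$ and $(Y_{\rm Kron})_{ij} = (Y_{kk})_{ij} = Y^{\mathrm{full}}_{ij}$; the edge status is unchanged. In particular, if \emph{neither} $i$ nor $j$ was adjacent to $r$, their connection is untouched, which is claim~2.

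The main obstacle — and the only place the argument is not purely mechanical — is the non-cancellation issue in Case~1: in principle the rank-one correction could exactly annihilate a pre-existing entry of $Y_{kk}$, severing an edge rather than creating one. I would handle this by invoking the physical structure of a distribution network: series admittances of transmission lines have strictly negative real part (positive series resistance/reactance ratio aside, the conductance term is of one sign), and Kron-reducing a passive node cannot flip that sign, so the magnitude of $(Y_{\rm Kron})_{ij}$ stays bounded away from zero. This is the same genericity assumption already implicit in \cite{6316101} and in the Opti-KRON setup, so citing it is legitimate; alternatively, one can phrase Lemma~\ref{lem: cliques} as holding for all but a measure-zero set of line parameters. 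I would also remark that diagonal entries require no special treatment since the lemma only concerns edges (off-diagonal structure), and that the symmetry $Y_{rk} = Y_{kr}^{\rm T}$ used to get the clean rank-one form holds for the networks considered here.

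Finally, I would note the inductive consequence used later: applying the single-node result repeatedly shows that Kron-reducing any set $\mathcal{R}$ of nodes turns the neighborhood structure into a union of cliques — each eliminated node's former open neighborhood becomes a clique — which is exactly the meshing phenomenon that the radialization procedure must undo.
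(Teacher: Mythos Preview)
Your proof is correct and follows essentially the same route as the paper's: both compute the single-node Kron reduction as a rank-one perturbation of $Y_{kk}$ and read off the clique/invariance structure from the support of the outer product $Y_{kr}Y_{rk}$. The paper organizes this via a $3\times 3$ block partition of $Y^{\mathrm{full}}$ into far/adjacent/reduced rows rather than your entry-wise case split, and it does not explicitly address the non-cancellation subtlety you flag in Case~1---so your treatment is, if anything, slightly more careful on that point.
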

\begin{proof}
Let $\mathcal{K} \subset \mathcal{V}$ be the set of nodes we \emph{keep} in the reduced network. We further partition $\mathcal{K}$ into two disjoint subsets:
$\mathcal{K}_{\text{a}} = \left\{ i \in \mathcal{K} \mid i \text{ is adjacent to } r \right\}$ and
$\mathcal{K}_{\text{f}} = \left\{ i \in \mathcal{K} \mid i \text{ is not adjacent to } r \right\}$, where ``f" signifies ``far". Without loss of generality, we can partition the admittance matrix as
\begin{align}
  Y^{\text{full}}
  \;=\;
  \begin{bmatrix}
    Y_{\mathcal{K}_{\text{f}},\mathcal{K}_{\text{f}}} 
      & Y_{\mathcal{K}_{\text{f}},\mathcal{K}_{\text{a}}} 
      & \mathbf{0} \\
    Y_{\mathcal{K}_{\text{a}},\mathcal{K}_{\text{f}}}  
      & Y_{\mathcal{K}_{\text{a}},\mathcal{K}_{\text{a}}}
      & Y_{\mathcal{K}_{\text{a}},r} \\
    \mathbf{0} 
      & Y_{r,\mathcal{K}_{\text{a}}} 
      & Y_{r r}
  \end{bmatrix},
\end{align}
where the zero blocks reflect that no node in $\mathcal{K}_{\text{f}} $ is adjacent to $r$. Here, $Y_{rr} $ is a scalar, and $Y_{\mathcal{K}_{\text{a}},r} $ and $Y_{r,\mathcal{K}_{\text{a}}} $ are column and row vectors with no zero entries. Now, considering~\eqref{eq: Y_kron}, the Kron reduction of $Y^{\rm full}$ with respect to $r$ can be stated as
\begin{align}
  Y^{\text{Kron}} 
  &=
  \begin{bmatrix}
    Y_{\mathcal{K}_{\text{f}},\mathcal{K}_{\text{f}}} 
      & Y_{\mathcal{K}_{\text{f}},\mathcal{K}_{\text{a}}}\\
    Y_{\mathcal{K}_{\text{a}},\mathcal{K}_{\text{f}}}  
      & Y_{\mathcal{K}_{\text{a}},\mathcal{K}_{\text{a}}}
  \end{bmatrix}
  -
  \frac{
  \begin{bmatrix}
    \mathbf{0}\\
    Y_{\mathcal{K}_{\text{a}},r}
  \end{bmatrix}
  \begin{bmatrix}
    \mathbf{0} & Y_{r,\mathcal{K}_{\text{a}}}
  \end{bmatrix}}{Y_{r r}}.
\end{align}
This simplifies to
\begin{align} \label{eq: Kron_adj_far}
  Y^{\text{Kron}} 
  =
  \begin{bmatrix}
    Y_{\mathcal{K}_{\text{f}},\mathcal{K}_{\text{f}}}
      & Y_{\mathcal{K}_{\text{f}},\mathcal{K}_{\text{a}}}\\
    Y_{\mathcal{K}_{\text{a}},\mathcal{K}_{\text{f}}} 
      & Y_{\mathcal{K}_{\text{a}},\mathcal{K}_{\text{a}}} 
      - \frac{Y_{\mathcal{K}_{\text{a}},r} \ Y_{r,\mathcal{K}_{\text{a}}}}{Y_{r r}}
  \end{bmatrix}.
\end{align}

Note that only the lower-right block of~\eqref{eq: Kron_adj_far}, which represents the connection among adjacent nodes, changes. Thus, Kron reduction does not affect connectivity between nodes which are non-adjacent to the reduced node. Additionally, the term $ Y_{\mathcal{K}_{\text{a}},r} Y_{r,\mathcal{K}_{\text{a}}}$ is the outer product of two vectors and is, therefore, a fully dense $\text{degree}(r) \times \text{degree}(r)$ matrix. This implies that any two nodes previously adjacent to node $r$ become directly connected in the reduced network. Thus, the Kron-reduction results in a \textit{clique} of degree~$r$ among all adjacent nodes of a reduced node, regardless of the initial graph's topology (e.g., meshed or radial). 
\end{proof}

Lemma~\ref{lem: cliques} exploits the fact that
$Y_{\mathcal{K}_{\text{f}},\mathcal{K}_{\text{f}}}$ is unaffected by the Kron reduction, as first observed in \cite{Kekatos_graph_id}, and it addresses Kron reduction of a single node. However, since simultaneous Kron reduction of $m$ nodes is equivalent to $m$ sequential single-node reductions~\cite{kron_2013}, the lemma generalizes to reductions of multiple nodes.
To further elucidate the structure of Kron-reduced radial networks, we recall the definition of maximal cliques:
\begin{definition}[Maximal Clique]
A clique is \emph{maximal} if it is not included in any larger clique.
\end{definition}
When the original network is radial, the cliques introduced by Kron reduction are both maximal and edge-disjoint (i.e., they do not share edges)~\cite{inverse_pf}. Maximal cliques with two nodes form simple edges, while those with three or more introduce cycles (i.e., local meshing). Consequently, a radial network may become a dense network after Kron reduction, e.g., see Fig.~\ref{fig: theorem}(b). 
This motivates the question: \textit{Can we reverse the Kron-based transformation and recover a radial representation?} To address this, we replace each maximal clique of three or more nodes by a radial sub-network. Because the maximal cliques are edge-disjoint, these replacements can be performed independently. Replacing all such maximal cliques yields a fully \textit{radialized} reduced network.

Reference~\cite{inverse_pf} leverages the characteristics of the admittance matrix of each maximal clique as a sub-network to recover a less reduced network, which also exhibits radial topology. This approach involves introducing additional hidden nodes to each maximal clique and determining the radial connections in the reduced network. However, the objective in~\cite{inverse_pf} is to reconstruct the full network from a Kron-reduced network in the absence of the full network’s admittance matrix.
In contrast, this paper focuses on identifying the \textit{critical} reduced nodes that must be reinserted to the set of super-nodes to recover radiality. Specifically, we demonstrate how to ensure that each maximal clique with three or more nodes can be transformed into a radial subgraph by reintroducing the minimal subset of previously reduced nodes.

To accomplish this, let $\mathcal{G} := (\mathcal{V}, \mathcal{E})$ again be the full graph representing a radial electrical network, and let $\mathcal{G_R} := (\mathcal{V_R}, \mathcal{E_R})$ denote its Kron-reduced graph. Consider one of the maximal cliques in $\mathcal{G_R}$, and let $\mathcal{V}_{\rm clique} \subseteq \mathcal{V_R} \subset \mathcal{V}$ be the set of its nodes with $|\mathcal{V}_{\rm clique}| \ge 3$. 
We define $\mathcal{T} = (\mathcal{V_T}, \mathcal{E_T})$ as the sub-tree of $\mathcal{G} $ that spans all nodes in $\mathcal{V}_{\rm clique}$, which includes additional nodes from $ \mathcal{V} $ to maintain connectivity, such that $\mathcal{V}_{\rm clique} \subset \mathcal{V_T} $. Since $\mathcal{G}$ is radial, $\mathcal{T}$ is unique~\cite{diestel2018graph}.
\begin{figure}[!t]
    \centering
    \includegraphics[width=1\linewidth]{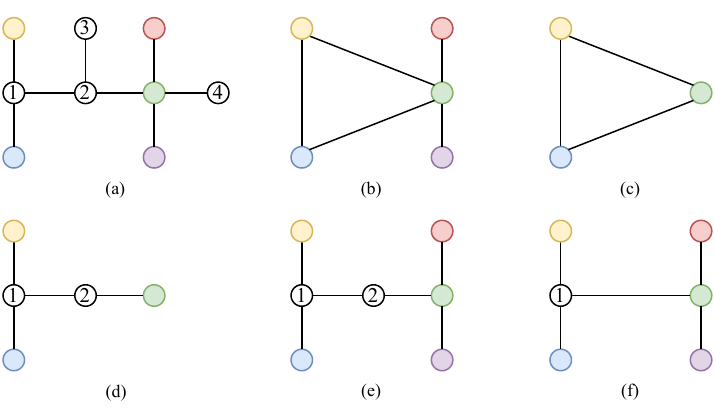}
    \caption{From a radial network toward a reduced radial network. (a) represents a full network $\mathcal{G}$. (b) shows the reduced network $\mathcal{G_R}$. A maximal clique is demonstrated in (c). The spanning sub-tree of the clique $\mathcal{T} $ is represented in (d). (e) shows the augmented Kron-reduced graph $\widetilde{\mathcal{G}}_\mathcal{R}$. (f) demonstrates the reduced radialized graph. Although node 2 has a degree of 3 in $\mathcal{G}$, its degree is 2 in $\mathcal{T} $, and thus it is not a critical node for radiality.} 
    \label{fig: theorem}
\end{figure}
\begin{theorem}\label{thm: radialization}
The nodes in $\mathcal{T}$ with degree $\ge 3$ are the minimal subset of previously reduced nodes that must be kept to ensure radial connectivity among the nodes in $\mathcal{V}_{\rm clique}$.
\end{theorem}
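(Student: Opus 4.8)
\emph{Proof strategy.} The plan is to recast the statement as a fact about Kron reductions of the original radial graph $\mathcal{G}$ and then apply Lemma~\ref{lem: cliques}. Fix the maximal clique on $\mathcal{V}_{\rm clique}$, let $\mathcal{T}$ be its (unique) spanning sub-tree in $\mathcal{G}$, and set $D := \{v \in \mathcal{V_T} : \deg_{\mathcal{T}}(v) \ge 3\}$. A radialization of this clique amounts to picking a set $R$ of previously reduced nodes to reinsert; the kept set then becomes $\mathcal{S} := \mathcal{V}_{\rm clique} \cup R$, and since the Schur complement in~\eqref{eq: Y_kron} is uniquely determined by $\mathcal{S}$, the resulting sub-network is necessarily the Kron reduction of $\mathcal{G}$ onto $\mathcal{S}$. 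Writing $\mathcal{T}_{\mathcal{S}}$ for the minimal sub-tree of $\mathcal{G}$ spanning $\mathcal{S}$, it suffices to show (i) that the choice $R = D \setminus \mathcal{V}_{\rm clique}$ makes this Kron reduction radial, and (ii) that every $R$ for which it is radial must contain $D \setminus \mathcal{V}_{\rm clique}$. Together these identify $D$ — its reduced members plus the always-kept clique nodes — as the minimal (in fact unique minimum) set of nodes that must be retained for radial connectivity among $\mathcal{V}_{\rm clique}$.

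\emph{Sufficiency.} First I would note that because $\mathcal{T}$ is the \emph{minimal} sub-tree spanning $\mathcal{V}_{\rm clique}$, each of its leaves lies in $\mathcal{V}_{\rm clique}$ (otherwise pruning a non-clique leaf would give a smaller spanning sub-tree); hence every reduced node on $\mathcal{T}$ has degree exactly $2$ in $\mathcal{T}$, i.e.\ $\mathcal{V_T}\setminus(\mathcal{V}_{\rm clique}\cup D)$ consists only of degree-$2$ nodes. With $R = D \setminus \mathcal{V}_{\rm clique}$ we have $\mathcal{S}\subseteq\mathcal{V_T}$, so $\mathcal{T}_{\mathcal{S}} = \mathcal{T}$. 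By the shielding part of Lemma~\ref{lem: cliques} (a retained node on the $\mathcal{G}$-path between two kept nodes blocks any direct edge between them, and stubs hanging off $\mathcal{T}$ only perturb diagonal entries of $Y^{\rm full}$), Kron-reducing $\mathcal{G}$ onto $\mathcal{S}$ yields the graph obtained from $\mathcal{T}$ by suppressing precisely those degree-$2$ nodes — a tree spanning $\mathcal{V}_{\rm clique}$. Electrically each suppressed node is replaced by the series combination of its two incident branches, so the sub-network is radial and still connects every node of $\mathcal{V}_{\rm clique}$; thus $R = D\setminus\mathcal{V}_{\rm clique}$ is a valid radialization.

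\emph{Necessity.} Suppose $R$ yields a radial sub-network but some $v \in (D\setminus\mathcal{V}_{\rm clique})\setminus R$. Since $\mathcal{S}\supseteq\mathcal{V}_{\rm clique}$, the sub-tree $\mathcal{T}_{\mathcal{S}}$ is the union of the fixed $\mathcal{G}$-paths between nodes of $\mathcal{S}$ and hence contains $\mathcal{T}$ as a subgraph, so $v\in\mathcal{V}(\mathcal{T}_{\mathcal{S}})$ and $\deg_{\mathcal{T}_{\mathcal{S}}}(v)\ge\deg_{\mathcal{T}}(v)\ge 3$. Applying Lemma~\ref{lem: cliques} to the elimination of $v$ inside the reduction onto $\mathcal{S}$, its $\ge 3$ neighbours in $\mathcal{T}_{\mathcal{S}}$ (equivalently their super-node representatives) become pairwise adjacent — a clique of size $\ge 3$, hence a cycle — contradicting radiality. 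Therefore every valid $R$ contains $D\setminus\mathcal{V}_{\rm clique}$, and combined with the sufficiency part this proves that the degree-$\ge 3$ nodes of $\mathcal{T}$ are exactly the minimal set of previously reduced nodes that must be kept.

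\emph{Main obstacle.} The delicate step is upgrading Lemma~\ref{lem: cliques} from a single-node reduction to the simultaneous elimination of the whole set $\mathcal{V}\setminus\mathcal{S}$. In the necessity argument one must ensure that the clique created at $v$ has genuinely \emph{nonzero} off-diagonal admittances \emph{after all other removed nodes are eliminated as well}, i.e.\ that no accidental cancellation occurs; I expect to handle this via the sequential-reduction equivalence noted after Lemma~\ref{lem: cliques} together with a mild genericity/passivity assumption on $Y^{\rm full}$ (standard for physical $RL$/$RLC$ networks). The rest — that reduced nodes on $\mathcal{T}$ have degree exactly $2$, and that suppressing degree-$2$ nodes keeps a tree a tree — is routine once the leaf observation is in place.
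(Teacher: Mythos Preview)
Your approach is essentially the same as the paper's: both rely on Lemma~\ref{lem: cliques}, both observe that suppressing degree-$2$ vertices of $\mathcal{T}$ preserves radiality, and both conclude that only the degree-$\ge 3$ vertices are required. The paper first replaces the clique by all of $\mathcal{T}$ (whose edges are untouched by Lemma~\ref{lem: cliques}) and then re-reduces the degree-$\le 2$ vertices, whereas you go directly to $\mathcal{S}=\mathcal{V}_{\rm clique}\cup D$; this is a cosmetic difference.

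Where you differ substantively is rigor. Your leaf observation (every leaf of $\mathcal{T}$ lies in $\mathcal{V}_{\rm clique}$, so reduced vertices on $\mathcal{T}$ have degree exactly~$2$) is implicit in the paper but never stated. More importantly, your necessity argument --- that omitting any $v\in D$ forces a clique of size $\ge 3$ among its $\mathcal{T}$-neighbours and hence a cycle --- is an actual proof of minimality; the paper simply asserts ``consequently, only the vertices of degree $\ge 3$ in $\mathcal{T}$ are necessary'' without justification. You also correctly flag the one genuine subtlety the paper ignores: that the clique produced at $v$ must survive (no off-diagonal cancellation) after \emph{all} other eliminations, which requires invoking the sequential-reduction equivalence plus a genericity/passivity assumption on $Y^{\rm full}$. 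In short, your proposal follows the paper's route but closes gaps the paper leaves open.
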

\begin{proof}
To prove Theorem~\ref{thm: radialization}, we first show that not reducing the nodes in $\mathcal{T}$ ensures nodes in $\mathcal{V}_{\rm clique}$ become radially connected. We then show that only nodes of degree $\ge 3$ in $\mathcal{T}$ are critical to preserve the radial structure. 

Consider the Kron-reduced graph $\widetilde{\mathcal{G}}_\mathcal{R} = (\widetilde{\mathcal{V}}_\mathcal{R},\widetilde{\mathcal{E}}_\mathcal{R})$ by replacing the clique on \(\mathcal V_{\rm clique}\) in \(\mathcal G_R\) with the tree \(\mathcal T\). 
Since $\mathcal{T}$ is connected and radial, and none of the nodes in $\mathcal{V_T}$ were reduced, the connections among nodes in $\mathcal{T}$ remain unchanged in $\widetilde{\mathcal{G}}_\mathcal{R}$. This follows from Lemma~\ref{lem: cliques}, which shows that Kron reduction only modifies the connectivity among nodes that are simultaneously adjacent to a reduced node.  
Thus, in $\widetilde{\mathcal{G}}_\mathcal{R}$, the paths between nodes in $\mathcal{V}_{\rm clique}$ lie entirely within $\mathcal{T}$ and remain radial.  
Moreover, by Lemma~\ref{lem: cliques}, any vertex in $\mathcal{V_T} \setminus \mathcal{V}_{\rm clique}$ with degree at most 2 can still be Kron-reduced from $\widetilde{\mathcal{G}}_\mathcal{R}$ while preserving radiality.  
Consequently, only the vertices of degree $\ge 3$ in $\mathcal{T}$ are necessary to preserve radiality among the nodes in the original clique.
\end{proof}

We can use Theorem~\ref{thm: radialization} to find a reduced, yet radial, network from a meshed Kron-reduced network by identifying nodes that are critical for radialization of each maximal clique. For example, consider the reduced network in Fig.~\ref{fig: theorem} (c) as a maximal clique. Its spanning sub-tree in the full network is represented in Fig.~\ref{fig: theorem} (d), where node 1 is the only node with degree $\geq 3$ within the sub-tree. Therefore, Fig.~\ref{fig: theorem} (f) is the radialized network of (d).
In this process, the nodal injections of critical nodes are not reassigned. Thus, super-nodes in the radialized networks have the same voltage profile as in the pre-radialized networks and are equivalent. Also, radialization does not affect the errors of Kron-based network reduction.

Fig.~\ref{fig: overview} summarizes the proposed reduction workflow.
In the first stage, Algorithm~\ref{alg: zero_injection_reduction} eliminates eligible zero-injection nodes. 
The resulting network is then further reduced through the iterative Opti-KRON procedure in Algorithm~\ref{alg: NR}.
Finally, the radialization step identifies maximal cliques created by Kron reduction and reinserts the critical reduced nodes characterized in Theorem~\ref{thm: radialization} to recover a radial reduced feeder.

\section{Numerical Case Studies} \label{sec: results}
    This section presents the simulation results for the proposed model applied to the 533-bus distribution test system~\cite{9124806} and a real single-phase distribution feeder with 3499 nodes, i.e. South Alburgh test case. The optimization problems in this study were implemented in Julia with JuMP~\cite{JuMP} and solved using Gurobi Optimizer 12.0, with the optimality gap set to 0.1\%. To evaluate reduction quality, we compare voltage magnitudes resulting from power flow on the reduced networks with those from the full networks. We also compare our method with two approaches that have been used in the power systems literature on network reductions.

\subsection{533-Bus Distribution Feeder}
This test system, available in MATPOWER, is based on a real 12~kV distribution feeder in southern Sweden and is represented as a single-phase equivalent of a balanced three-phase system. The network includes 262 nodes with distributed generation, with installed capacities up to 2 MW. We consider two representative operating scenarios corresponding to the maximum and minimum net load hours of 2022.~Fig. \ref{fig: v-full} represents the voltage profile for these scenarios.
\begin{figure}
    \centering
    \includegraphics[width=1\linewidth]{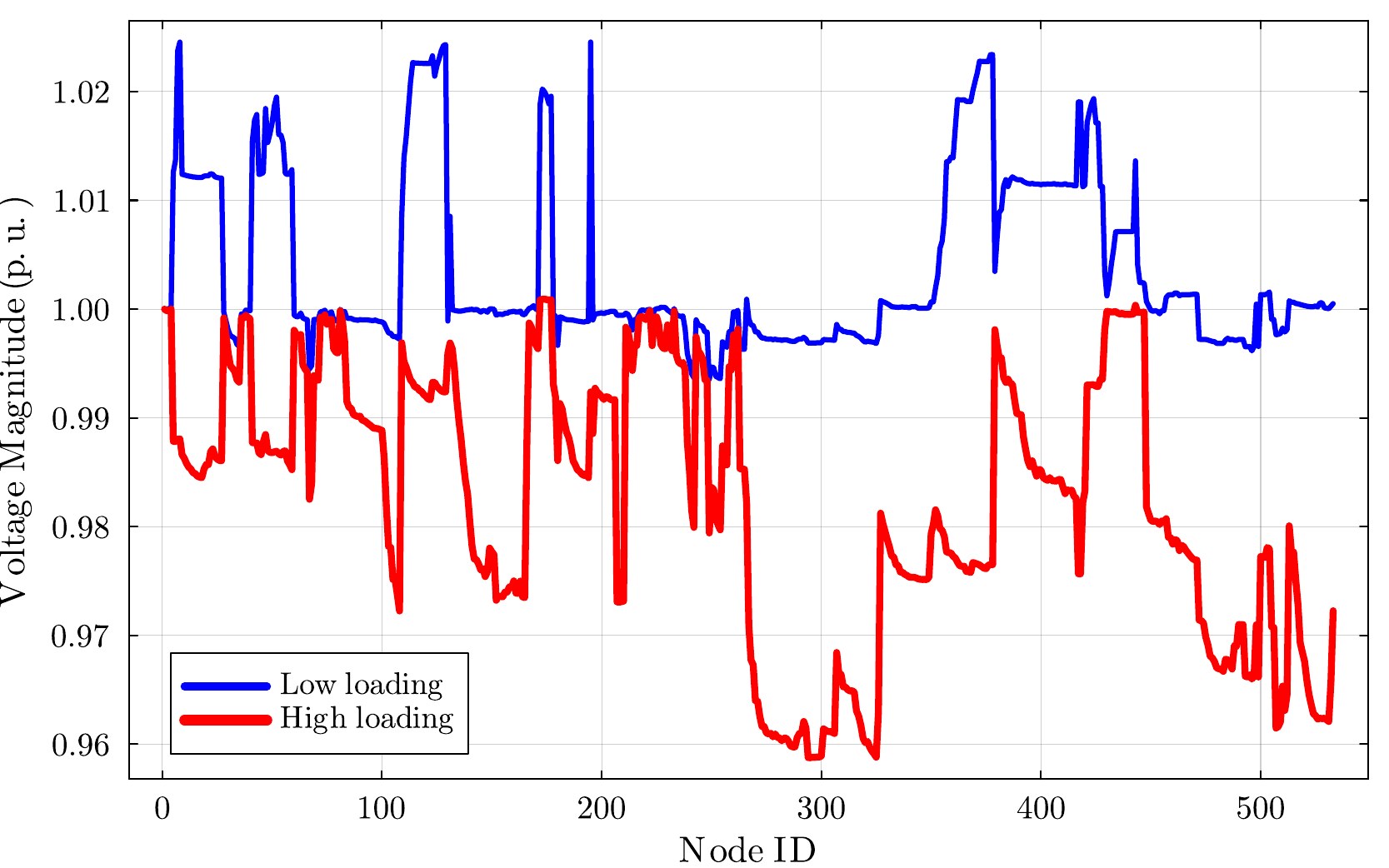}
    \caption{Voltage profile across 533-bus radial network. Low loading represents the net load of -5 MW and high loading represents the net load of 45 MW.}
    \label{fig: v-full}
\end{figure}
We applied the proposed model to this network and could find different reduction levels. The optimization parameter has been set to $\alpha=10/n$, and $q=1$. The effect of changing the voltage magnitude limit, $\Bar{E}$, is summarized in Table \ref{tab: E-dist}, where LL and HL denote low and high loading conditions, respectively.
\begin{table}[!t]
    \centering
    \caption{Comparison of high and low loading voltage errors for different reduction levels.}
    \label{tab: E-dist}
    \begin{adjustbox}{max width=1\textwidth}
    \begin{tabular}{r|cccccccc}
        \toprule
        $\Bar{E}$ & Red. & \multicolumn{2}{c}{Max ($m$p.u.)} & \multicolumn{2}{c}{Mean ($m$p.u.)} & \multicolumn{2}{c}{Median ($m$p.u.)} \\
        ($m$p.u.) & (\%) & LL & HL & LL & HL & LL & HL \\
        \midrule
        1.0   & 69      & 1.0 & 1.0     & 0.1 & 0.3     & 0.0 & 0.2 \\
        2.5   & 85      & 2.4 & 2.5     & 0.3 & 0.7     & 0.1 & 0.5 \\
        5.0     & 92      & 3.5 & 4.8     & 0.5 & 1.5     & 0.3 & 1.2 \\
        7.5   & 96      & 7.2 & 7.4     & 1.8 & 2.6     & 0.7 & 2.3 \\
        10    & 97      & 9.6 & 9.8     & 1.4 & 2.5     & 0.5 & 1.9 \\
        \bottomrule
    \end{tabular}
    \end{adjustbox}
\end{table}

As shown in Table~\ref{tab: E-dist}, applying the Opti-KRON formulation reduces the 533-bus distribution feeder by up to 97\% with acceptable maximum voltage magnitude errors (i.e., 0.01 p.u. or 10 $m$p.u.). However, these reduced networks are not necessarily radial.
Therefore, we apply our radialization approach to transform the Kron-reduced networks into radial ones. Table~\ref{tab:radialization_results} summarizes how many maximal cliques are present in each reduced network from Table~\ref{tab: E-dist} and shows the number of critical nodes that must be kept to restore a radial topology.

\begin{table}[!t]
    \centering
    \caption{Radialization Results for the 533-Bus Network}
    \label{tab:radialization_results}
    \begin{tabular}{c c c c}
        \toprule
        {Initial reduction} & Maximal cliques & Critical nodes & Final reduction \\
        \midrule
        69\% & 13 & 17 & 66\% \\
        85\% & 8  & 15 & 83\% \\
        92\% & 4  & 9  & 90\% \\
        96\% & 2  & 6  & 95\% \\
        97\% & 2  & 4  & 96\% \\
        \bottomrule
    \end{tabular}
\end{table}

As shown in Table~\ref{tab:radialization_results}, increased reduction levels yield fewer maximal cliques. Consequently, the number of critical nodes needed for radialization decreases. Overall, the cost of preserving radial topology—i.e., losing a small fraction in reduction—is relatively small.
The full graph of the network is demonstrated in Fig. \ref{fig: full-graph}, while super-nodes for the case of 85\% reduced network and their clusters have the same colors. 
\begin{figure}[!t]
    \centering
    \includegraphics[width=0.95\linewidth]{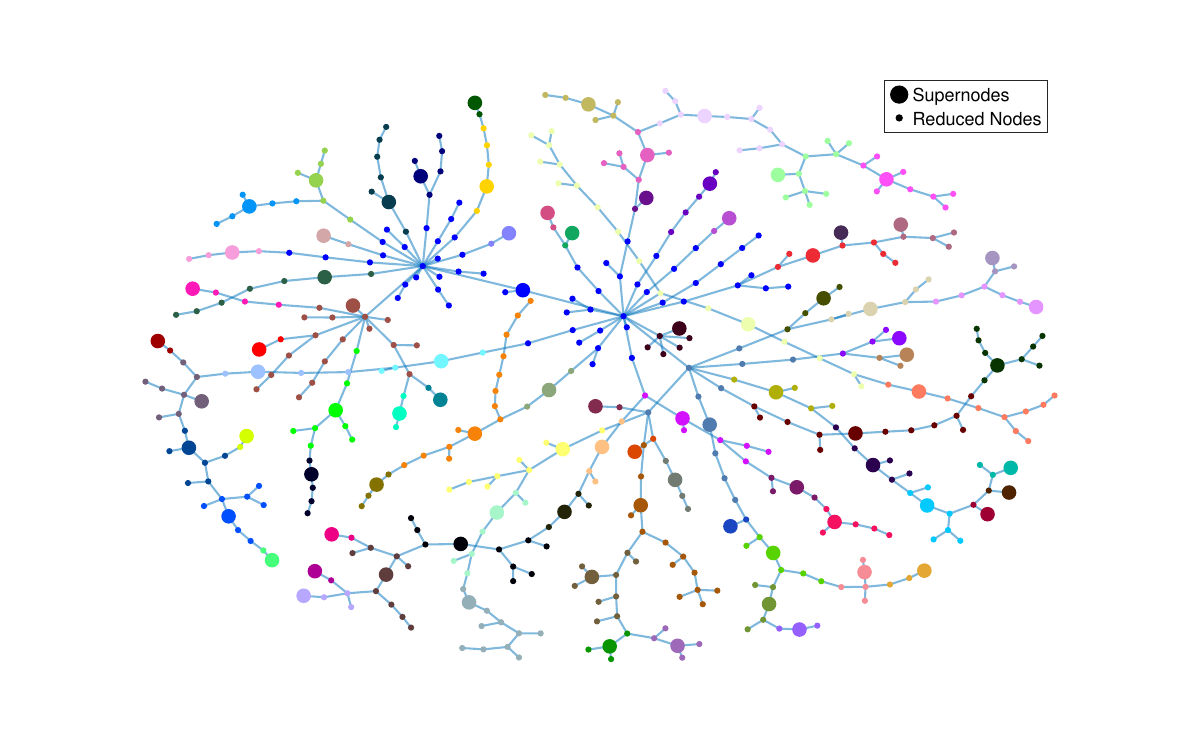}
    \caption{Graph visualization of the 533-bus full network. Super-nodes and their clusters are shown for 85\% reduced network. Nodes with the same color belong to the same cluster.}
    \label{fig: full-graph}
\end{figure}

To compare the accuracy and effectiveness of the reduced network on problems beyond basic power flow, we develop an optimization model to determine the reactive power setpoints for inverters located at distributed generation buses. The objective of this problem is to minimize the deviation of bus voltages from the nominal 1 p.u. value. The detailed mathematical formulation of this problem is presented in Appendix~\ref{appendix: Q-adjustment}. We applied this optimization to both the original and reduced versions of the 533-bus network under 100 loading scenarios. 
To generate synthetic load scenarios, we applied singular value decomposition~\cite{horn2012matrix} to historical load data from South Alburgh, Vermont, and fitted a Gaussian mixture model~\cite{bishop2006pattern} to model spatial and temporal correlations. We then generated scenarios by sampling from the mixture model and scaled them to match the load limits of the 533-bus system.
The voltage solution from the full network was used as the ground truth. After solving the optimization on the reduced network, the obtained reactive power setpoints were applied to a full network power flow, and the resulting voltages were compared to the ground truth. The maximum and mean voltage magnitude errors across all buses and scenarios are reported in Table~\ref{table: Q}.
\begin{table}[!t]
\centering
\caption{Computation time and voltage error for inverter dispatch optimization problem with different reduction levels.}
\label{table: Q}
    \begin{tabular}{c cc cc}
        \toprule
        \multirow{2}{*}{Reduction} & \multicolumn{2}{c}{Time (s)} & \multicolumn{2}{c}{Voltage Error ($m$p.u.)} \\
        & Meshed   & Radialized        & Maximum    & Mean  \\
        \midrule
        0\%~\rm (Full)& -     & 18.40   & 0.00      & 0.00    \\
        69\%         & 10.09 & 8.46    & 2.87      & 0.04    \\
        85\%         & 6.11  & 4.88    & 3.47      & 0.14    \\
        92\%         & 4.30  & 3.40    & 3.54      & 0.17    \\
        96\%         & 2.89  & 2.59    & 15.92     & 0.55    \\
        97\%         & 2.71  & 2.49    & 18.62     & 0.62    \\
        \bottomrule
    \end{tabular}
\end{table}

Table~\ref{table: Q} demonstrates that the proposed network reduction method significantly decreases computational time. While the radialized networks have more nodes, they introduce sparsity in the optimization problem, which speeds up solve times. Even for a 92\% reduction, the worst case voltage error remains below 0.004 p.u. (i.e., 4 $m$p.u.) and the resulting OPF problem was solved up to 5.4 times faster. 
\subsection{3499-bus South Alburgh feeder}

The South Alburgh feeder is a realistic utility distribution system from Vermont spanning voltage levels from 120~V to 26.6~kV. The original feeder is a three-phase network with 8387 nodes and 130 distributed generation units. In this work, we consider its phase-A subsystem, which yields a 3499-node feeder. From a week-long April dataset of 168 hourly loading scenarios, ranging from 0.73 to 1.2~MW, we selected three representative scenarios corresponding to low, medium, and high loading conditions for the reduction process.
We first applied the first stage reduction to reduce zero-injection nodes, using a voltage error threshold of $\bar{E} = 0.001$. We then applied radialization to increase network sparsity for the next stage. This resulted in an initial reduced network with 759 nodes, or a 78\% reduction, and an average voltage error of only 0.0002 p.u. In the second stage, we applied Opti-KRON to the initial reduced network. The optimization parameter has been set to $\alpha=20/n$, and $q=1$. Finally, we radialized the reduced networks.
Table~\ref{table: south-alburgh 1} represents the reduction result for this network, where ``\circled{1} vs. \circled{2}'' denotes the maximum absolute error relative to the Stage~1 reduced network, whereas ``Full vs. \circled{2}'' denotes the error relative to the full network.
\begin{table}[!t]
\centering
\caption{Stage \circled{2} reduction results on the South Alburgh feeder compared with Stage~\circled{1} and Full networks.}
\label{table: south-alburgh 1}
\begin{tabular}{c |cc c c}
\toprule
$\bar{E}$ & \multicolumn{2}{c}{Max Abs. Error ($m$p.u.)} & \multirow{2}{*}{Reduction} & \multicolumn{1}{c}{Opti-KRON} \\
    ($m$p.u.) & \circled{1} vs. \circled{2}    & Full vs. \circled{2} &  & Solve time (s)\\
\midrule
1.0  & 0.99     & 1.8   & 91\%       & 416.5 \\
2.5  & 2.49     & 3.1   & 93\%       & 481.6  \\
5.0  & 4.99     & 5.4   &  97\%       & 509.9  \\
7.5  & 7.78     & 7.8   &  98\%       &  531.5  \\
10   & 10.06    & 10.7  &  99\%       & 618.6  \\
\bottomrule
\end{tabular}
\end{table}
Since $\bar{E}$ restricts the voltage error of final reduced networks with respect to the initial reduced network from the first stage, maximum errors are slightly above $\bar{E}$.
 Additionally, the aggregated solve time is summarized in Table~\ref{table: south-alburgh 1}, which corresponds to the total time Gurobi spent solving the MILP across all iterations of Algorithm~\ref{alg: NR}. The average solve time per iteration for this network is approximately $0.75$ seconds, which demonstrates that the proposed formulation remains computationally tractable for large-scale networks.
To evaluate the robustness of the reduced network obtained using the three representative scenarios, we tested its performance across all 168 loading conditions from the original dataset. We solved a power flow on the 94\% reduced network for all the scenarios. Fig.~\ref{fig: histogram} shows the distribution of maximum voltage errors for this power flow problem.
\begin{figure}
    \centering
    \includegraphics[width=0.48\textwidth]{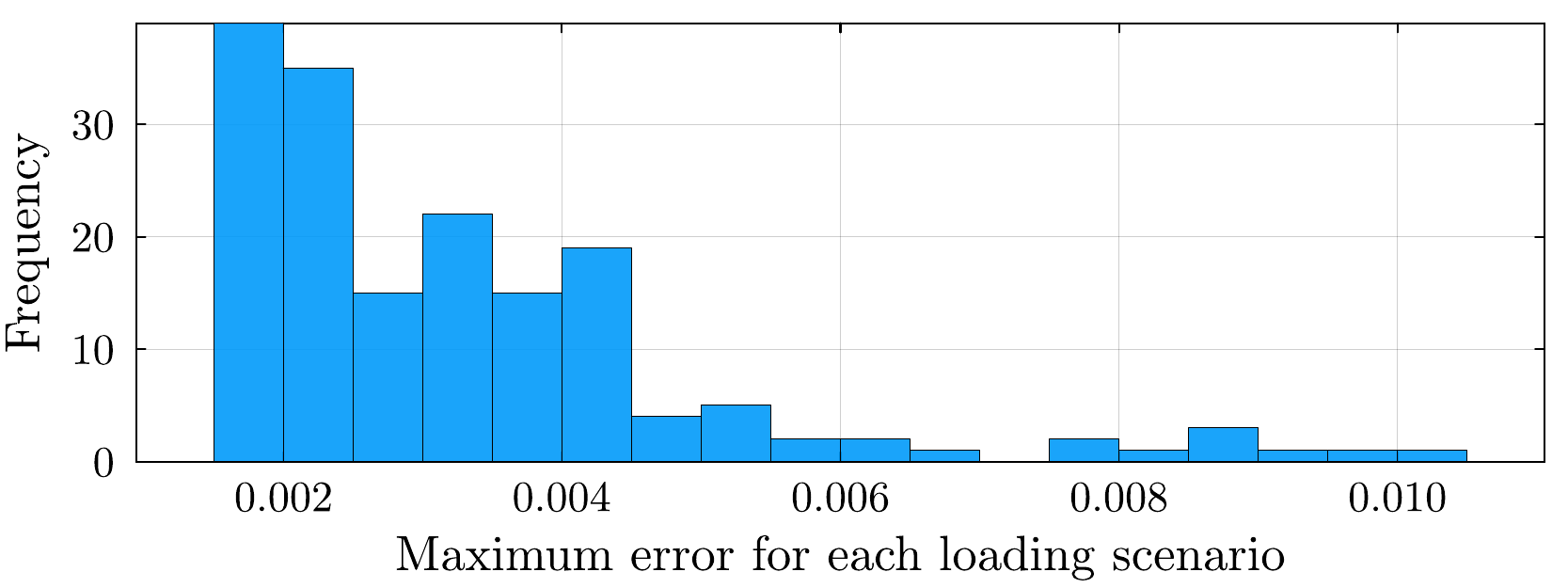}
    \caption{Histogram of maximum voltage errors across 168~load scenarios on the 94\% reduced South Alburgh feeder. Among the 168 scenarios, only 19 exhibit a maximum voltage error $\ge0.005$ p.u., each involving some combination of just 19 distinct nodes out of 3499.}
    \label{fig: histogram}
\end{figure}
Figure~\ref{fig: histogram} demonstrates that in over 89\% of the scenarios, the maximum voltage error remains below $0.005$ p.u. Across all scenarios, only 19 nodes exceed this threshold. Moreover, no scenario includes more than 2 such nodes among the 3499 buses.

To better visualize the trade-off between network reduction and accuracy, Fig.~\ref{fig:pareto} presents the Pareto front obtained from the reduction results for South Alburgh feeder. 
\begin{figure}[!t]
    \centering
    \includegraphics[width=0.48\textwidth]{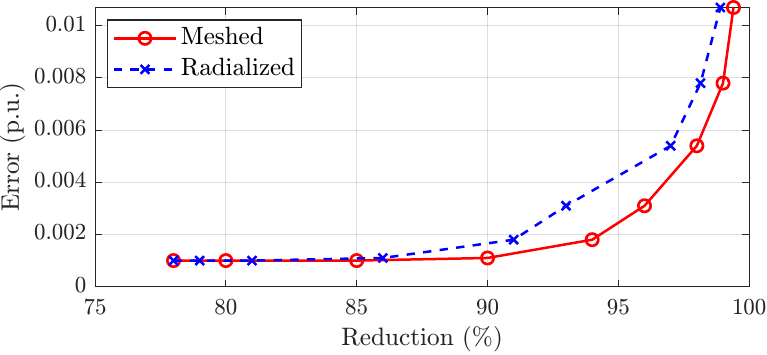}
    \caption{Maximum voltage magnitude error versus network reduction level for the South Alburgh feeder. The errors shown are from power flow solutions on the full and reduced networks.}
    \label{fig:pareto}
\end{figure}
The results show that both meshed and radialized reductions achieve substantial reductions even under tight voltage error limits, with meshed reductions about 2–3\% higher reduction levels. 

We applied the inverter dispatch optimal power flow problem (from Appendix~\ref{appendix: Q-adjustment}) to the South Alburgh feeder using 168 loading scenarios based on hourly load profiles over a week. Figure~\ref{fig: sa_opf} shows the maximum and mean voltage magnitude errors, along with OPF solve times, across different reduction levels. The solve times for pre-radialized reduced networks are not strictly decreasing with higher reductions. The reason is that smaller networks may still be more computationally intensive if they are more densely connected. However, this trend is not observed for radialized networks, where the solve time monotonically decreases with higher reductions.

\begin{figure}[!t]
    \centering
    \includegraphics[width=0.45\textwidth]{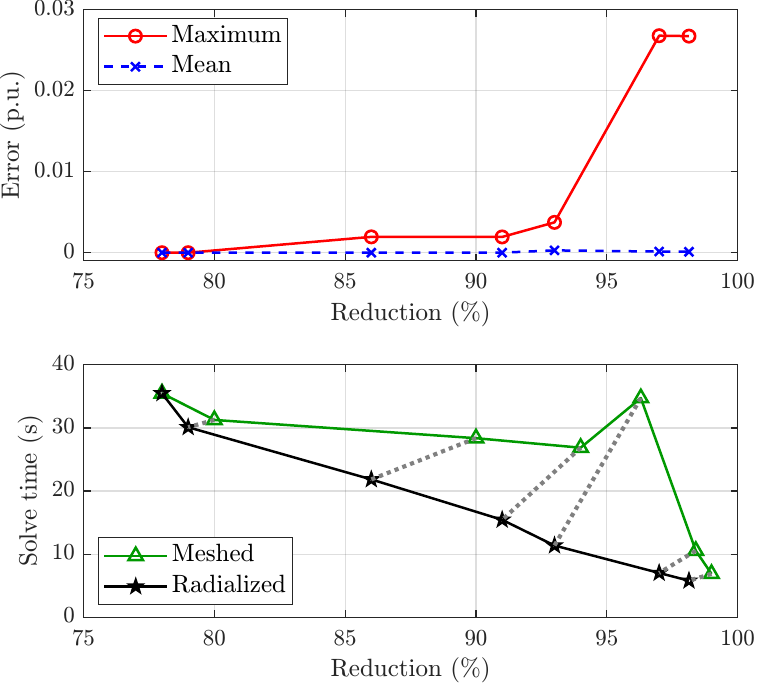}
    \caption{Voltage magnitude error and solver time for the inverter dispatch optimization problem on the reduced South Alburgh feeder under 168 loading scenarios. The voltage error plot reflects reduction levels based on the radialized networks. The 94\% reduced radialized network achieves a 14x speed-up (vs. 157.4 s for the full network) with maximum error below 0.002 p.u. While maximum error increases at high reduction levels, the mean error stays near zero, which indicates that the deviation for most nodes is negligible. }
    \label{fig: sa_opf}
\end{figure}

\subsection{Benchmarking, discussion, and limitations}
To the best of our knowledge, few established distribution network reduction methods jointly optimize node selection and preserve voltage profiles. Accordingly, the benchmarks considered here provide complementary comparisons for the two main components of the reduction problem: node selection and equivalent network construction.
We first compare Opti-KRON with~\cite{Mads_energies}, which uses electrical distance and K-means clustering followed by Kron reduction. While both methods combine clustering and reduction,~\cite{Mads_energies} relies on a similarity metric, whereas Opti-KRON directly incorporates voltage accuracy. This difference helps explain why Opti-KRON achieves lower voltage deviations on the South Alburgh network, as shown in Table~\ref{tab:IR_comparison}.

Second, we compare Opti-KRON with inversion reduction~\cite{Zachary_inversion_reduction_2018,Zachary_inversion_reduction_2019,Zachary_inversion_reduction_2021}, which preserves voltages at a set of kept nodes.
This approach does not determine the representative nodes or estimate voltages at reduced nodes.
Hence, for comparison, we use the super-nodes and assignments obtained by Opti-KRON.
Table~\ref{tab:IR_comparison} compares the resulting voltage errors.
This comparison isolates model construction error, not node selection performance.
Inversion reduction distributes reduced node injections among super-nodes, which can improve the accuracy of voltage at super-nodes.
In contrast, Opti-KRON assigns each reduced node to a representative super-node through a binary assignment matrix. This explicit assignment is useful when reduced nodes host loads, generators, or controllable devices, since each device remains associated with a known location in the reduced model rather than being implicitly redistributed. Consequently, inversion reduction is preferable when a salient node set is known and voltage accuracy is required only at those nodes, whereas Opti-KRON is preferable when the kept nodes are not known a priori or when reduced node voltages and device locations must remain interpretable for downstream optimization studies.
\begin{table}[t!]
\caption{Maximum Voltage Error Comparison.}
\label{tab:IR_comparison}
\centering
\begin{tabular}{c c|c c c}
\toprule
\multirow{2}{*}{Node type} &
\multirow{2}{*}{Reduction} &
\multicolumn{3}{c}{Max Error ($m$p.u.)} \\
\cmidrule(lr){3-5}
& & \cite{Mads_energies} & \cite{Zachary_inversion_reduction_2019} & Opti-KRON \\
\midrule
\multirow{3}{*}{Super-nodes}
& 80.0\% & 12.6 & \textbf{0.0} & \textbf{0.0} \\
& 90.0\% & 14.8 & \textbf{0.04} & 0.4 \\
& 95.0\% & 13.8 & \textbf{0.24} & 1.5 \\
\midrule
\multirow{3}{*}{Reduced nodes}
& 80.0\% & 22.5 & \textbf{1.0} & \textbf{1.0} \\
& 90.0\% & 22.6 & 1.3 & \textbf{1.1} \\
& 95.0\% & 26.9 & 3.9 & \textbf{3.1} \\
\midrule
\multicolumn{2}{c|}{Preserves loads} 
& {\color{ForestGreen}\checkmark} 
& \ding{55} 
& {\color{ForestGreen}\checkmark} \\ 
\bottomrule
\end{tabular}
\end{table}

Overall, the results highlight several practical implications of the proposed framework. Radialization is not only a structural requirement but also a computational advantage: although radialization reintroduces some nodes, the resulting sparsity can reduce the solution time of downstream voltage control problems relative to denser meshed Kron-reduced equivalents. 
The benchmark comparisons further demonstrate the value of optimization-based node selection over heuristic clustering.
The binary assignment matrix also provides an interpretable aggregation map by linking each reduced node to a specific super-node; controllable devices therefore remain associated with known locations in the reduced model rather than being implicitly redistributed.

The South Alburgh results suggest that reductions obtained from a small set of representative scenarios can generalize well across unseen loading conditions. Formal guarantees outside the training scenario set, however, remain an open problem. Adding more scenarios may improve robustness, but it also enlarges the optimization problem and can introduce additional scalability challenges. Similarly, the iterative cutting-plane strategy improved tractability and achieved accurate reductions in our tests, but it does not certify global optimality over all possible simultaneous node reductions.

\section{Conclusion} \label{sec: end}

This paper introduces a two-stage Kron-based network reduction approach that can be applied to distribution networks.
The proposed model performs reliably across diverse loading conditions and demonstrates robustness without significant voltage profile deviations. On a 533-bus radial distribution network, the method achieved an 85\% reduction and maintained voltage errors within acceptable limits. Similarly, the proposed two-stage reduction applied to a realistic 3499-bus Vermont feeder significantly reduced network size and ensured acceptable voltage accuracy.
Additionally, we introduced a radialization step that recovers a radial structure after Kron reduction. Our results showed that it preserves network structure and improves computational speed. 

The formulation developed in this paper is limited to balanced networks; an extension to unbalanced three-phase distribution feeders has been presented in~\cite{optikron3}. While the reduced networks satisfy the prescribed voltage error threshold on the training scenarios, stronger guarantees for unseen operating conditions remain to be established for future work. In addition, we are interested in studying how OPF solutions, distributed energy resource controller design, and hosting capacity allocations based on the reduced networks can be lifted to the full network and the associated optimality and feasibility gaps that result from such approaches.
If the gaps are practical, the Opti-KRON could potentially scale up distribution grid computations significantly across different domains.

\appendix
\subsection{Rectangular Reformulation of the Optimization Problem} \label{appendix: rec-decomp}
While the Opti-KRON formulation in~\eqref{opt_problem_final} is expressed using complex variables, it is reformulated in real rectangular coordinates so that standard optimization solvers can solve it. To do so, we first substitute~\eqref{eq: powerflow} by the following equations:
\begin{align}
    &\Re(\mathbf{Y}) V_l^{\text{Re}} - \Im(\mathbf{Y}) V_l^{\text{Im}} = A \Re(\hat{I}) \quad \forall l \in \mathcal{L}\label{eq: pf_real} \\
    &\Re(\mathbf{Y}) V_l^{\text{Im}} + \Im(\mathbf{Y}) V_l^{\text{Re}} = A \Im(\hat{I}) \quad \forall l \in \mathcal{L}  \label{eq: pf_imag},
\end{align}
where $V_l^{\text{Re}}$ and $V_l^{\text{Im}}$ are real and imaginary parts of $V_l$. Accordingly, we adjust~\eqref{eq: Big-M} for  $\forall i,j \in \mathcal{V}$ and $ \forall l \in \mathcal{L}$ as 
\begin{subequations} \label{eq: Big-M-dec}
    \begin{align}
    &W_{l,i,j}^{\text{Re}}  \leq (1-A_{i,j}) M_{l,i}^{\text{Re}} + V_{l,i}^{\text{Re}}\\
    &W_{l,i,j}^{\text{Re}}  \geq (A_{i,j}-1) M_{l,i}^{\text{Re}} + V_{l,i}^{\text{Re}} \\
    &W_{l,i,j}^{\text{Re}}  \leq M_{l,i}^{\text{Re}} A_{i,j}\\
    &W_{l,i,j}^{\text{Re}}  \geq-M_{l,i}^{\text{Re}} A_{i,j}. 
    \end{align}
\end{subequations}
 $W^{\text{Im}}$ is  similarly defined. Similarly,
\eqref{eq: error-opt_lin} is decomposed to
\begin{subequations}\label{eq: err_mat_dec}
\begin{align} 
    &E_{l}^{\text{Re}} = A {\rm diag} \{ \Re \{ \hat{V_l} \} \} - W_l^{\text{Re}} \\
    &E_{l}^{\text{Im}} = A {\rm diag}\{ \Im \{ \hat{V_l}\} \} - W_l^{\text{Im}} .
\end{align}
\end{subequations}
 $E_{l}^{\text{Re}}$ and $E_{l}^{\text{Im}}$ represent real and imaginary parts of the error matrix. Additionally,~\eqref{eq: errordef} transforms into
\begin{subequations} \label{eq: errordef-dec}
\begin{align} 
    &e_{l}^{\text{Re}} =  A^{\rm T}  V_{l}^{\text{Re}} - \Re \{\hat{V}_l\}\\
    &e_{l}^{\text{Im}} =  A^{\rm T} V_{l}^{\text{Im}} - \Im \{\hat{V}_l\}.
\end{align}
\end{subequations}
However,~\eqref{eq: v_abs_lin} is already decomposed into rectangular form.
Thus, Opti-KRON can be stated as:
\begin{subequations} \label{opt_dec}
\begin{align}
    \min_{\mathbb{E}} \quad &\sum_{l \in \mathcal{L}} \sum_{i \in \mathcal{V}} 
    \left( \left\Vert {\vec e_i^{\,T}} E_{l}^{\text{Re}} \right\Vert_{\infty} + \left\Vert {\vec e_i^{\,T}} E_{l}^{\text{Im}} \right\Vert_{\infty} \right) \notag \\
    & - \alpha \sum_{i \in \mathcal{V}} ({A^{\rm prev}_{i,i}} - A_{i,i})  \label{eq: obj-decomposed}\\
    \eqref{eq: assign_limit1}, \eqref{eq: assign_limit2}, &\eqref{eq :binary}-\eqref{eq: A_Omega},\eqref{eq: v_abs_lin},\eqref{eq: CP},\eqref{eq: pf_real}-\eqref{eq: errordef-dec} ,
\end{align} 
\end{subequations}
where $\mathbb{E} = \{A, \Omega, V^{\text{Re}}, V^{\text{Im}}, E^{\text{Re}}, E^{\text{Im}}, e^{\text{Re}}, e^{\text{Im}}, W^{\text{Re}}, W^{\text{Im}} \}$.

\subsection{Bounds on Big~M} \label{appendix: big-m} 
In this subsection, we derive tight yet feasible bounds on the Big~M constants in~\eqref{eq: Big-M-dec} to 
reduce solve time of~\eqref{opt_dec}. At each iteration $k$, we propose setting the Big~M constants to
\begin{subequations}
\begin{align}\label{eq_bigM_apdx}
M_{l,i}^{\text{Re}}(k) &:= \text{MICE}_{l,i}^{k-1} + \alpha q + \Re(\hat{V}_{l,i}), \\
M_{l,i}^{\text{Im}}(k) &:= \text{MICE}_{l,i}^{k-1} + \alpha q + \Im(\hat{V}_{l,i}), \label{eq_bigM_apdx2}
\end{align}
\end{subequations}
where $\text{MICE}_{l,i}^{k-1} := \left\Vert {\vec e_i^{\,T}} E_{l}^{\text{Re}} \right\Vert_{\infty} + \left\Vert {\vec e_i^{\,T}} E_{l}^{\text{Im}}\right\Vert_{\infty}$ is computed using the optimal solution from iteration $k-1$. $E_{l}^{\text{Re}}$ and $E_{l}^{\text{Im}}$ are introduced in~\eqref{eq: err_mat_dec}.
We now prove that the proposed bound for ${M_{l,i}^{\text{Re}}}^k$ is feasible and tight relative to the value of MICE. To preserve feasibility of \eqref{eq: Big-M-dec}, we require $M_{l,i}^{\mathrm{Re}} \ge V_{l,i}^{\mathrm{Re}}=\Re(\hat{V}_{l,i}) + e_{l,i}^{\mathrm{Re}}$.
Accordingly, we need to find an upper bound on $e_{l,i}^{\mathrm{Re}}$ to obtain a feasible value for $M_{l,i}^{\mathrm{Re}}$.
We can rewrite the objective function at iteration $k$ as
\begin{align}
    &O^k = \sum_{\text{l} \in \mathcal{L}} \sum_{\text{i} \in \mathcal{V}} \text{MICE}_{l,i}^k  
    - \alpha \sum_{i} ({A^{\rm prev}_{i,i}} - A_{i,i}^k).
\end{align}
The term $-\alpha\sum_{i}(A^{\text{prev.}}_{i,i}-A^{\,k}_{i,i})$ incentivizes further node reduction: it decreases the objective by $\alpha$ for each node reduced in iteration $k$ and is zero otherwise. When this decrease outweighs the change in the first term between iterations $k-1$ and $k$, more nodes are reduced. 
Accordingly,
{\small
\begin{align}
   \alpha \sum_{i\in\mathcal{V}}
      \bigl(A^{\text{prev.}}_{i,i}-A^{\,k}_{i,i}\bigr)
    &\ge
   \sum_{l\in\mathcal{L}}\sum_{i\in\mathcal{V}}
      \bigl(\mathrm{MICE}_{l,i}^k-\mathrm{MICE}_{l,i}^{\,k-1}\bigr).
   \label{eq: MICE-diff-bound} \\
    & \ge  \text{MICE}_{l,i}^k - \text{MICE}_{l,i}^{k-1}.
\end{align}}
The cutting-plane constraint~\eqref{eq: CP} limits the number of nodes that can be reduced in any iteration to at most $q$. Therefore,
\begin{align}
    \alpha q\ge  \text{MICE}_{l,i}^k - \text{MICE}_{l,i}^{k-1},
\end{align}
which yields
\begin{align}
    \alpha q  +  \text{MICE}_{l,i}^{k-1} \ge  \left\Vert {\vec e_i^{\,T}} E_{l}^{\text{Re}} \right\Vert_{\infty} + \left\Vert {\vec e_i^{\,T}} E_{l}^{\text{Im}} \right\Vert_{\infty}.
\end{align}
Since $e^{\rm Re}_{l,i} \le \left\Vert {\vec e_i^{\,T}} E_{l}^{\text{Re}} \right\Vert_{\infty}$, we have $\alpha q  +  \text{MICE}_{l,i}^{k-1} \geq e^{\rm Re}_{l,i}$, which provides an upper bound on $e_{l,i}^{\mathrm{Re}}$. The derivation for ${M_{l,i}^{\text{Im}}}^k$ follows analogously. Thus, the Big~M constants in~\eqref{eq_bigM_apdx} and~\eqref{eq_bigM_apdx2} are guaranteed to be feasible.

\subsection{Reactive Power Adjustment Optimization Problem} \label{appendix: Q-adjustment}
This appendix presents the mathematical formulation of the reactive power optimization problem used to validate the reduced networks. The objective is to adjust inverter reactive power setpoints to minimize voltage magnitude deviations with AC power flow and operational constraints.
\begin{subequations}
\begin{align}
&\min_{\substack{|V|,\ \theta, \ q^{\text{inv}},\ p^{\rm g}},\ q^{\rm g} }\quad 
 \sum_{i \in \mathcal{N}} \left| |V_{i}| - 1 \right| \label{eq:Q-obj} \\
\text{s.t.} \quad 
& P_{i} + p^{\rm g}_{i | i= \rm slack} = \sum_{k \in \mathcal{N}} |V_{i}||V_{k}| G_{i,k} \cos(\theta_{i} - \theta_{k}) \notag\\
&+|V_{i}||V_{k}| B_{i,k} \sin(\theta_{i} - \theta_{k}) \quad \forall i \label{eq:Q-PF-real} \\
& Q_{i} + q^{\rm g}_{i| i= \rm slack} + q_{i}^{\text{inv}} = \sum_{k \in \mathcal{N}} |V_{i}||V_{k}| G_{i,k} \sin(\theta_{i} - \theta_{k}) \notag \\
&- |V_{i}||V_{k}| B_{i,k} \cos(\theta_{i} - \theta_{k}) \quad \forall i\label{eq:Q-PF-reactive} \\
& (q_{i}^{\text{inv}})^2 \leq S_{\max,i}^2 - (p_{i}^{\text{inv}})^2 \quad \forall i \in \mathcal{S} \label{eq:Q-inv-capacity} \\
& V_{\min} \leq |V_{i}| \leq V_{\max} \quad \forall i. \label{eq:Q-voltage-bounds} 
\end{align}  
\end{subequations}
Equation~\eqref{eq:Q-obj} defines the objective of minimizing the sum of voltage magnitude deviations from 1 p.u. across all buses.
Equations~\eqref{eq:Q-PF-real} and~\eqref{eq:Q-PF-reactive} represent the nonlinear AC power flow constraints for real and reactive power, including inverter contributions. Parameters $P$ and $Q$ are the injections of load, while $p^{\rm g}$ and $q^{\rm g}$ are variables that represent slack bus active and reactive outputs.
Constraint~\eqref{eq:Q-inv-capacity} ensures that each inverter operates within its apparent power limit.
Constraint~\eqref{eq:Q-voltage-bounds} enforces voltage magnitude bounds.

\printbibliography
\end{document}